\newtheorem{theorem}{Theorem}
\newtheorem{lemma}{Lemma}
\newtheorem{corollary}{Corollary}
\newtheorem{property}{Property}
\newtheorem{definition}{Definition}
\newtheorem{proposition}{Proposition}
\newtheorem{example}{Example}
\newcommand{\RNum}[1]{\uppercase\expandafter{\romannumeral #1\relax}}
\newcommand{\gf}{{\mathbb{F}}}
\newcommand{\ls}[1]
    {\dimen0=\fontdimen6\the\font\lineskip=#1\dimen0
     \advance\lineskip.5\fontdimen5\the\font
     \advance\lineskip-\dimen0
     \lineskiplimit=0.9\lineskip
     \baselineskip=\lineskip
     \advance\baselineskip\dimen0
     \normallineskip\lineskip\normallineskiplimit\lineskiplimit
     \normalbaselineskip\baselineskip
     \ignorespaces}
\begin{document}

\bibliographystyle{abbrv}

\title{The Explicit values of the UBCT, the LBCT and the DBCT of the inverse function}

\author{
Yuying Man\footnotemark[1]\thanks{Y. Man and X. Zeng are with Hubei Key Laboratory of Applied Mathematics, Faculty of Mathematics and Statistics, Hubei University, Wuhan 430062, China. Email:yuying.man@aliyun.com, xzeng@hubu.edu.cn},
Nian Li\footnotemark[2]\thanks{N. Li and Z. Liu are with Hubei Key Laboratory of Applied Mathematics, School of Cyber Science and Technology, Hubei University, Wuhan 430062, China. Email: nian.li@hubu.edu.cn, liuzhen@hubu.edu.cn},
Zhen Liu \footnotemark[2],
Xiangyong Zeng \footnotemark[1]
}
\date{\today}
\maketitle

\thispagestyle{plain} \setcounter{page}{1}

\begin{abstract}

Substitution boxes (S-boxes) play a significant role in ensuring the resistance of block ciphers against various attacks. The Upper Boomerang Connectivity Table (UBCT), the Lower Boomerang Connectivity Table (LBCT) and the Double Boomerang Connectivity Table (DBCT) of a given S-box are crucial tools to analyze its security concerning specific attacks. However, there are currently no related results for this research.
The inverse function is crucial for constructing S-boxes of block ciphers with good cryptographic properties in symmetric cryptography. Therefore, extensive research has been conducted on the inverse function, exploring various properties related to standard attacks. Thanks to the recent advancements in boomerang cryptanalysis, particularly the introduction of concepts such as UBCT, LBCT, and DBCT, this paper aims to further investigate the properties of the inverse function $F(x)=x^{2^n-2}$ over $\gf_{2^n}$ for arbitrary $n$. As a consequence, by carrying out certain finer manipulations of solving specific equations over $\gf_{2^n}$, we give all entries of the UBCT, LBCT of $F(x)$ over $\gf_{2^n}$ for arbitrary $n$.
Besides, based on the results of the UBCT and LBCT for the inverse function, we determine that $F(x)$ is hard when $n$ is odd. Furthermore, we completely compute all entries of the DBCT of $F(x)$ over $\gf_{2^n}$ for arbitrary $n$. Additionally, we provide the precise number of elements with a given entry by means of the values of some Kloosterman sums. Further, we determine the double boomerang uniformity of $F(x)$ over $\gf_{2^n}$ for arbitrary $n$. Our in-depth analysis of the DBCT of $F(x)$ contributes to a better evaluation of the S-box's resistance against boomerang attacks.

\noindent{\bf Keywords} Upper Boomerang Connectivity Table, Lower Boomerang Connectivity Table, Double Boomerang Connectivity Table, double boomerang uniformity

\noindent{\bf MSC (2020)} 94A60, 11T06

\end{abstract}

\section{Introduction}

Substitution boxes (S-boxes) are essential components of symmetric cryptographic algorithms since S-boxes usually are the only nonlinear elements of these cryptosystems. The functions used to design S-boxes should have good cryptographic properties in order to resist various kinds of cryptanalytic attacks.

Differential attack, introduced by Biham and Shamir \cite{diff-at} in 1991, is one of the most fundamental cryptanalytic tools to assess the security of block ciphers. The main idea is to search for non-random pairs of input and output differences of the cipher with high probability.
The Difference Distribution Table (DDT) and the differential uniformity of S-boxes, introduced by Nyberg \cite{def-DDT} in 1993, can be used to measure the ability of an S-box to resist the differential attack. The smaller the differential uniformity $\delta_F$ of an S-box $F$, the stronger its ability to resist differential attack. For practical applications in cryptography, it is usually desirable to employ mappings with differential uniformity no greater than $4$. For example, the AES uses the inverse function $x\mapsto x^{-1}$ over $\mathbb{F}_{2^n}$, which has differential uniformity $4$ for even $n$ and $2$ for odd $n$.

Boomerang attack is another crucial cryptanalytical technique on block cyphers, introduced by Wagner \cite{BA} in 1999, which can be considered an extension of the classical differential attack. To analyze the boomerang attack of block cyphers in a better way, analogous to the DDT concerning the differential attack, in Eurocrypt 2018, Cid et al. in \cite{BCT} introduced a new tool known as Boomerang Connectivity Table (BCT) to measure the resistance of an S-box against boomerang attacks. The BCT tool allows us to easily evaluate the probability of the boomerang switches when it covers one round. Small entries in the BCT of a cipher prevent it from attacks related to the boomerang cryptanalysis.
When studying how to extend the BCT theory to boomerang switches on more rounds, Wang et al. \cite{BDT} proposed the concept of the Boomerang Difference Table (BDT), a variant of the BCT with one supplementary variable fixed. Moreover, BDT renamed as Upper Boomerang Connectivity Table (UBCT) in \cite{ULBCT}, its variant called BDT' is denoted by $\mathcal{D}_{BCT}$ in \cite{BDT'} and renamed as Lower Boomerang Connectivity Table (LBCT) in \cite{ULBCT}. The names of UBCT and LBCT emphasize better the fact that the Upper (resp. Lower) BCT focuses on the upper (resp. lower) characteristic.
These different tables enhance the analysis and evaluation of boomerang distinguishers. Under specific assumptions, the aforementioned tables enable the estimation of the probability of boomerang distinguishers, even when the middle part consists of multiple rounds \cite{IT-D23}.
It is worth noting that all the above tables are defined for one S-box, while in order to more efficiently calculate the probabilities of boomerang distinguishers and capture the behavior of two consecutive S-boxes in boomerang attacks, Hadipour et al. in \cite{def-DBCT} introduced a new tool called Double Boomerang Connectivity Table (DBCT) and used it for automatic searching for boomerang distinguishers.
For a boomerang distinguishers, its probability is closely related to the S-box being used. Specifically, the probability of the distinguisher is related to the DBCT, BCT and the DDT of the S-box. To demonstrate the effect of the S-box on the probability of the boomerang distinguisher, \cite[Table 4]{p-DBCT} compares the uniformity of DDT, BCT and DBCT for different S-boxes and gives the probabilities for the 7-round distinguisher under different S-boxes.
These observations indicate that, in addition to the uniformity of BCT and DDT, the uniformity of DBCT is a new measure criterion to evaluate the performance of S-box for resisting boomerang attacks. Therefore, the double boomerang uniformity should be used together with the boomerang uniformity to have a better evaluation of the S-box against the boomerang attack\cite{p-DBCT}.
The explicit values of the entries of DBCT and their cardinalities are crucial tool to test the resistance of block ciphers based on variants of the function against cryptanalytics such as differential and boomerang attacks. However, obtaining these entries and the cardinalities in DBCT for a given S-box is challenging, and therefore there are currently no related results for this research.

In this paper, we investigate the specific tables of the inverse function by carrying out some finer manipulations of solving certain equations over $\mathbb{F}_{2^n}$.
Firstly, using the techniques for solving equations over finite fields and the result of the DDT of the inverse function, we provide all entries of the UBCT and LBCT of the inverse function. Next, based on the results of the UBCT and LBCT of the inverse function, we determine that $F(x)$ is hard when $n$ is odd. Furthermore, utilizing the balance property of the trace function and the value of some Kloosterman sums, we compute all entries of the DBCT of the inverse function $F(x)=x^{2^n-2}$ over $\gf_{2^n}$. Additionally, we provide an accurate value of the number of elements with a given entry. Besides, we determine the double boomerang uniformity of $F(x)$ over $\gf_{2^n}$ for arbitrary $n$.

The remainder of this paper is organized as follows. In Section \ref{pre}, we present some basic notions and a few known helpful results in the technical part of the paper. Section \ref{DBCT-result} provides explicit values of all entries in the DBCT of $F(x)$ over $\gf_{2^n}$ for arbitrary $n$. Section \ref{con-remarks} concludes this paper.

\section{Preliminaries}\label{pre}

Throughout this paper, $\mathbb{F}_{2^n}$ denotes the finite field with $2^n$ elements  and  ${\rm Tr}_{1}^n(\cdot)$ denotes the  (absolute) trace function from $\mathbb{F}_{2^n}$ onto its prime field $\mathbb{F}_{2}$, where $n$ is a positive integer. Recall that  for $x\in\mathbb{F}_{2^n}$, ${\rm Tr}_{1}^n(x)=\sum_{i=0}^{n-1}x^{2^i}$.

In this section,  we recall some basic definitions and present some results which will be used frequently in this paper.

\begin{definition}\label{definition-DDT}\rm (\cite{def-DDT})
Let $F(x)$ be a  mapping from $\mathbb{F}_{2^n}$ to itself. The Difference Distribution Table (DDT) of $F(x)$ is a $2^n \times 2^n$ table where the entry at $(a,b)\in \mathbb{F}_{2^n}^2$ is defined by
$${\rm DDT}_{F}(a,b)=|\{x \in \gf_{2^n}: F(x)+F(x+a)=b \}|.$$

The mapping $F(x)$ is said to be \emph{differentially $\delta$-uniform} if $\delta(F)=\delta$ \cite{def-DDT}, and accordingly $\delta(F)$ is called the \emph{differential uniformity} of $F(x)$, where
$$
\delta(F)=\max _{a,b \in \mathbb{F}_{2^n},a\ne 0} {\rm DDT}_{F}(a,b).
$$
When $F$ is used as an S-box inside a cryptosystem, the smaller the value $\delta(F)$ is, the better the contribution of $F$ to the resistance against differential attack.


\end{definition}

\begin{definition}\label{definition-BCT}\rm (\cite{BCT})
Let $F: \mathbb{F}_2^n\rightarrow \mathbb{F}_2^n$ be a permutation over $\mathbb{F}_{2^n}$.
The boomerang Connectivity Table (BCT) is defined by the  $2^n\times 2^n$ table defined for $a, b\in \mathbb{F}_{2^n}$ by
$$
{\rm BCT}_F(a,b)=|\{x\in \mathbb{F}_{2^n} : F^{-1}(F(x)+b)+F^{-1}(F(x+a)+b)=a\}|.
$$

The boomerang uniformity~(see. \cite{BCT-u}) of $F$ is defined by
$$
\beta(F)=\max_{a,b\in \mathbb{F}_{2^n},ab\neq 0}{\rm BCT}_F(a,b).
$$
\end{definition}

In \cite{BDT}, two tables known as BDT and BDT' were introduced. However, Delaune et al. renamed them to UBCT and LBCT in \cite{ULBCT}, respectively emphasizing the upper and lower characteristics. These tables exhibit symmetry, with properties in one table having corresponding counterparts in the other. Delaune et al. defined the UBCT and LBCT as follows.

\begin{definition}\label{definition-UBCT}\rm (\cite{ULBCT})
Let $F(x)$ be a mapping from $\mathbb{F}_{2^n}$ to itself. The Upper Boomerang Connectivity Table (UBCT) is a $2^n\times 2^n\times 2^n$ table defined for $(a,b,c)\in \mathbb{F}_{2^n}^3$ by
\begin{equation*}
\begin{aligned}
{\rm UBCT}_{F}(a,b,c)=|\{x \in \gf_{2^n}: &\,\, F^{-1}(F(x)+c)+F^{-1}(F(x+a)+c)=a,\\
&\,\, F(x)+F(x+a)=b \}|.
\end{aligned}
\end{equation*}
\end{definition}

\begin{definition}\label{definition-LBCT}\rm (\cite{ULBCT})
Let $F(x)$ be a mapping from $\mathbb{F}_{2^n}$ to itself. The Lower Boomerang Connectivity Table (LBCT) is a $2^n\times 2^n\times 2^n$ table defined for $(b,c,d)\in \mathbb{F}_{2^n}^3$ by
\begin{equation*}
\begin{aligned}
{\rm LBCT}_{F}(b,c,d)=|\{x \in \gf_{2^n}: &\,\,F^{-1}(F(x)+d)+F^{-1}(F(x+b)+d)=b,\\
&\,\,F(x)+F(x+c)=d \}|.
\end{aligned}
\end{equation*}
\end{definition}

In order to capture the properties of two continuous S-boxes in boomerang attacks, Yang et al. in \cite{p-DBCT} study the DBCT. It is worth noting that the concept of DBCT was initially introduced and algorithmically defined in \cite{def-DBCT}. In fact, the notation for DBCT used in \cite{p-DBCT} is the same as in \cite{def-DBCT}, but the authors in \cite{p-DBCT} adopted a more succinct definition. Yang et al. defined the DBCT in the following.

\begin{definition}\label{definition-DBCT}\rm (\cite{p-DBCT})
Let $F(x)$ be a mapping from $\mathbb{F}_{2^n}$ to itself. The Double Boomerang Connectivity Table (DBCT) is a $2^n\times 2^n$ table defined for $(a,d)\in \mathbb{F}_{2^n}^2$ by
$$
{\rm DBCT}_{F}(a,d)=\sum\limits_{a,d} {\rm dbct}(a,b,c,d),
$$
where ${\rm dbct}(a,b,c,d)={\rm UBCT}_{F}(a,b,c)\cdot {\rm LBCT}_{F}(b,c,d)$.

The double boomerang uniformity of $F(x)$ is the largest value in the DBCT except for the first row and the first column:
$$
\beta_d=\max_{a,d\in \mathbb{F}_{2^n}^*}{\rm DBCT}_F(a,d).
$$
\end{definition}

\begin{property}\label{pro-DBCT}\rm (\cite{p-DBCT})
Let $F(x)$ be a mapping from $\mathbb{F}_{2^n}$ to itself. For $a,b,c,d \in \gf_{2^n}^*$, it is note that ${\rm DBCT}_{F}(a,d)$ can be expressed as the sum of two parts:
$$
{\rm DBCT}_{F}(a,d)=\sum\limits_{b=c} {\rm dbct}(a,b,c,d)+\sum\limits_{b\ne c} {\rm dbct}(a,b,c,d),
$$
and nonzero ${\rm dbct}(a,b,c,d)$ occurs mainly when $b=c$. Consequently,
\begin{equation*}
\begin{aligned}
{\rm DBCT}_{F}(a,d)&=\sum\limits_{b, c} {\rm dbct}(a,b,c,d)\\
&=\sum\limits_{b, c} {\rm UBCT}_{F}(a,b,c)\cdot {\rm LBCT}_{F}(b,c,d)\\
&\geq \sum\limits_{b=c} {\rm UBCT}(a,b,c)\cdot  {\rm LBCT}(b,c,d)\\
&=\sum\limits_{b} {\rm DDT}_{F}(a,b)\cdot {\rm DDT}_{F}(b,d).
\end{aligned}
\end{equation*}
For $a=0$ or $d=0$, it can be easily obtain that
$$
{\rm DBCT}(a,d)=\sum\limits_{c}{\rm UBCT}_{F}(0,0,c)\cdot {\rm LBCT}_{F}(0,c,d)=2^{2n},
$$
$$
{\rm DBCT}(a,d)=\sum\limits_{b}{\rm UBCT}_{F}(a,b,0)\cdot {\rm LBCT}_{F}(b,0,0)=2^{2n}.
$$
\end{property}

In \cite{p-DBCT}, the authors noticed that certain S-boxes exhibit a remarkable property: the set of nonzero ${\rm dbct}(a,b,c,d)$ always satisfies $b=c$. This is an exciting property as it turns the "$\geq$" in Property \ref{pro-DBCT} into a more desirable "$=$". When this property applies to an S-box, it is defined as a hard S-box.

\begin{definition}\label{definition-Hard}\rm (\cite{p-DBCT})
Let $F(x)$ be a mapping from $\mathbb{F}_{2^n}$ to itself. For $a,d\in \gf_{2^n}^*$, $F(x)$ is hard if the following holds:
\begin{equation*}
\begin{aligned}
{\rm DBCT}_{F}(a,d)&=\sum\limits_{b, c} {\rm UBCT}_{F}(a,b,c)\cdot {\rm LBCT}_{F}(b,c,d)\\
&= \sum\limits_{b=c} {\rm UBCT}(a,b,c)\cdot  {\rm LBCT}(b,c,d)\\
&=\sum\limits_{b} {\rm DDT}_{F}(a,b)\cdot {\rm DDT}_{F}(b,d).
\end{aligned}
\end{equation*}
\end{definition}

To state our main result, we need the Kloosterman sum $K(1)$ in $\gf_{2^n}$, which we introduce below.
\begin{definition}\label{definition-K}\rm (\cite{Kloosterman})
The Kloosterman sum $K(1)$ in $\gf_{2^n}$ is defined by
$$
K(1)=\sum\limits_{x\in \gf_{2^n}} (-1)^{{\rm Tr_1^n}(x^{-1}+x)}.
$$
Here by convention we use $(-1)^{{\rm Tr_1^n(0^{-1})}}:=1$.

The explicit value of $K(1)$ can be obtained easily from a formula of Carlitz
$$
K(1)=1+\frac{(-1)^{n-1}}{2^{n-1}}\sum\limits_{i=0}^{\lfloor \frac{n}{2} \rfloor}(-1)^i \binom{n}{2i}7^i.
$$
\end{definition}

The following lemma will be used frequently in this paper.

\begin{lemma}\label{lemma1-root}\rm (\cite{def-squ})
Let $a, b, c \in \mathbb{F}_{2^n}$, $a\ne 0$ and $F(x)=ax^2+bx+c$. Then
\begin{itemize}
\item [\rm (i)] $F(x)$ has exactly one root in $\mathbb{F}_{2^n}$ if and only if $b=0$;
\item [\rm (ii)] $F(x)$ has exactly two roots in $\mathbb{F}_{2^n}$ if and only if $b\ne 0$ and ${\rm Tr}_{1}^n(\frac{ac}{b^2})=0$;
\item [\rm (iii)] $F(x)$ has no root in $\mathbb{F}_{2^n}$ if and only if $b\ne 0$ and ${\rm Tr}_{1}^n(\frac{ac}{b^2})=1$.
\end{itemize}
\end{lemma}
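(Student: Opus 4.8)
The plan is to exploit two structural features of characteristic two: the Frobenius map $x \mapsto x^2$ is a bijection on $\mathbb{F}_{2^n}$, and the additive (that is, $\mathbb{F}_2$-linear) map $L(y) = y^2 + y$ has image exactly equal to the trace-zero hyperplane $\{z \in \mathbb{F}_{2^n} : {\rm Tr}_1^n(z) = 0\}$. I would first dispose of the degenerate case $b = 0$ and then reduce the general case $b \neq 0$ to the normalized equation $y^2 + y = ac/b^2$ by an invertible linear substitution, after which the trace condition reads off directly.

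First I would treat (i). When $b = 0$, the equation $ax^2 + c = 0$ is equivalent to $x^2 = c/a$ (recall $a \neq 0$). Since squaring is a bijection of $\mathbb{F}_{2^n}$, this has the unique solution $x = (c/a)^{2^{n-1}}$; hence $F$ has exactly one root. Conversely, if $b \neq 0$ I will show below that the number of roots is $0$ or $2$, never $1$, so $b = 0$ is also necessary for a single root. This settles the ``if and only if'' in (i).

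Next, for $b \neq 0$ I would substitute $x = (b/a)y$ into $ax^2 + bx + c = 0$. After dividing by the nonzero constant $b^2/a$, the equation becomes
$$
y^2 + y + \frac{ac}{b^2} = 0,
$$
and since this change of variable is a bijection on $\mathbb{F}_{2^n}$, the root counts for $F$ and for the normalized equation coincide. Setting $\delta = ac/b^2$, I must count solutions of $L(y) = \delta$, where $L(y) = y^2 + y$. The kernel of $L$ is $\{y : y^2 = y\} = \mathbb{F}_2 = \{0,1\}$, which has size $2$; hence each fiber of $L$ is either empty or a coset of $\{0,1\}$ of size exactly $2$. This already shows the root count is $0$ or $2$, completing the argument promised in (i).

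It remains to decide which fibers are nonempty, and this is the one genuinely content-bearing step. Because ${\rm Tr}_1^n(y^2) = {\rm Tr}_1^n(y)$, we have ${\rm Tr}_1^n(L(y)) = 0$ for every $y$, so $\mathrm{Im}\,L$ is contained in the trace-zero hyperplane; comparing cardinalities ($|\mathrm{Im}\,L| = 2^n/|\ker L| = 2^{n-1}$ equals the size of the hyperplane) forces equality. Therefore $L(y) = \delta$ is solvable precisely when ${\rm Tr}_1^n(\delta) = 0$, giving exactly two roots when ${\rm Tr}_1^n(ac/b^2) = 0$ (proving (ii)) and no root when ${\rm Tr}_1^n(ac/b^2) = 1$ (proving (iii)). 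The main obstacle is exactly this image-equals-hyperplane identification; everything else is bookkeeping once the Frobenius bijectivity and the linear normalization are in place.
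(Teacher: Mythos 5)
Your proof is correct and complete: the reduction of the case $b\ne 0$ to $y^2+y=ac/b^2$ via $x=(b/a)y$, the observation that the fibers of $y\mapsto y^2+y$ have size $0$ or $2$, and the identification of its image with the trace-zero hyperplane by a cardinality count together give exactly the stated trichotomy. The paper itself offers no proof of this lemma --- it is quoted from Lidl and Niederreiter as a known fact --- and your argument is the standard one found there, so there is nothing to reconcile.
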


The known results on the DDT of the inverse function are given as below, which will be useful for proving our results.

\begin{lemma}\label{inverse-DDT}\rm (\cite{ex-FBCT})
Let $F(x)=x^{2^n-2}$ be the inverse function over $\gf_{2^n}$. Then
\begin{itemize}
\item [\rm (i)] if $a=0$ and $b=0$, then ${\rm DDT}_{F}(a,b)=2^n$;
\item [\rm (ii)] if $a=0$ and $b\ne 0$, then ${\rm DDT}_{F}(a,b)=0$;
\item [\rm (iii)] if $a\ne 0$ and $b=0$, then ${\rm DDT}_{F}(a,b)=0$;
\item [\rm (iv)] if $a\ne 0$ and $ab=1$, then
$$
{\rm DDT}_{F}(a,b)=
\begin{cases}
    4, &  {\rm if}\,\, {\rm Tr}_{1}^n(1)=0; \\
    2, &  {\rm if}\,\,  {\rm Tr}_{1}^n(1)=1; \\
\end{cases}
$$
\item [\rm (v)] if $a\ne 0$, $b\ne 0$, and $ab\ne 1$, then
$${\rm DDT}_{F}(a,b)=
\begin{cases}
    2, &  {\rm if}\,\, {\rm Tr}_{1}^n(\frac{1}{ab})=0; \\
     0, &  {\rm if}\,\,  {\rm Tr}_{1}^n(\frac{1}{ab})=1. \\
\end{cases}
$$
\end{itemize}
\end{lemma}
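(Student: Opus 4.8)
The plan is to exploit the elementary fact that $F(x)=x^{2^n-2}$ coincides with the compositional inverse $x\mapsto x^{-1}$ on $\gf_{2^n}^*$ and fixes $0$, so that the defining equation $F(x)+F(x+a)=b$ becomes a rational equation which I can clear to a quadratic and analyze via Lemma \ref{lemma1-root}. Throughout I will use that $F$ is a bijection of $\gf_{2^n}$.

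First I would dispose of the degenerate rows and columns. When $a=0$ the left-hand side is identically $0$, giving $2^n$ solutions if $b=0$ (case (i)) and none if $b\ne 0$ (case (ii)). When $a\ne 0$ but $b=0$, the equation reads $F(x)=F(x+a)$; since $F$ is a bijection this forces $a=0$, a contradiction, so the count is $0$ (case (iii)).

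For the main cases $a\ne 0$, $b\ne 0$ I would split the count according to whether $x$ lies in the exceptional set $\{0,a\}$, where one of the two arguments vanishes, or not. For $x\in\{0,a\}$ the equation collapses to $a^{-1}=b$, i.e.\ $ab=1$; this contributes the two (distinct) solutions $x=0$ and $x=a$ exactly when $ab=1$, and nothing otherwise. For $x\notin\{0,a\}$ both terms are genuine inverses, and combining over a common denominator turns the equation into $bx^2+abx+a=0$, equivalently $x^2+ax+a/b=0$. Applying Lemma \ref{lemma1-root}(ii)--(iii) with discriminant parameter $\frac{a/b}{a^2}=\frac{1}{ab}$, this quadratic has two roots when ${\rm Tr}_1^n(\frac{1}{ab})=0$ and none when ${\rm Tr}_1^n(\frac{1}{ab})=1$.

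The one point requiring care --- and the only place where the argument could misfire --- is checking that these two regimes do not double count. I would verify that neither $x=0$ nor $x=a$ can satisfy $x^2+ax+a/b=0$, since each substitution forces $a=0$; hence the quadratic's roots always lie outside $\{0,a\}$ and add cleanly to the boundary contribution. Assembling the pieces then yields case (iv) (when $ab=1$, the two boundary solutions together with two or zero quadratic roots according as ${\rm Tr}_1^n(\frac{1}{ab})={\rm Tr}_1^n(1)$ is $0$ or $1$, giving $4$ or $2$) and case (v) (when $ab\ne 1$, no boundary solutions and two or zero quadratic roots, giving $2$ or $0$). The whole argument is a finite case analysis; the \emph{main obstacle} is merely keeping the boundary bookkeeping consistent with the trace criterion rather than any genuine analytic difficulty.
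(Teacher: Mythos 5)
Your proof is correct. Note that the paper itself offers no proof of this lemma --- it is quoted as a known result from the cited reference --- so there is nothing to compare against; your argument (splitting off the boundary points $x\in\{0,a\}$, which contribute exactly when $ab=1$, clearing denominators to the quadratic $x^2+ax+\frac{a}{b}=0$ for $x\notin\{0,a\}$, applying Lemma \ref{lemma1-root} with trace parameter $\frac{1}{ab}$, and checking that the quadratic's roots avoid $\{0,a\}$) is the standard derivation and is sound in every case.
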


\section{Main result and its proof}\label{DBCT-result}
In this section, we study explicit values of all entries of the DBCT of the inverse function. According to the Definition \ref{definition-DBCT}, in order to determine the DBCT of the inverse function, we first need to study its UBCT and LBCT.

Now we begin to deal with the UBCT of the inverse function.
We note that Definition \ref{definition-UBCT} applies only to invertible functions, i.e., permutations. By the definitions of UBCT, computing the UBCT of a permutation $F(x)$ over $\gf_{2^n}$ requires knowledge of its compositional inverse $F^{-1}(x)$.
However, it is very difficult to compute the compositional inverse with explicit form of $F(x)$ over $\gf_{2^n}$. Therefore, it is interesting and meaningful to compute the UBCT of $F(x)$ without considering $F^{-1}(x)$.
In the following, we present an equivalent formula to compute UBCT without knowing $F^{-1}(x)$ and $F(x)$ simultaneously.

Let $a,c\in \gf_{2^n}^*$ and $F(x)$ be a permutation polynomial over $\gf_{2^n}$. Let $y=F^{-1}(F(x)+c)$ in the first equation $F^{-1}(F(x)+c)+F^{-1}(F(x+a)+c)=a$ in Definition \ref{definition-UBCT}. For one thing, we have $F(x)+F(y)=c$ from the assumption; for the other, we have $F(x+a)+F(y+a)=c$. Since $F(x)$ is bijective, for $a,c\in \gf_{2^n}^*$, ${\rm UBCT}_{F}(a,b,c)$ equals to the number of solutions of the following equation system
\begin{equation}\label{UBCT}
\begin{cases}
    F(x)+F(y)=c, \\
    F(x+a)+F(y+a)=c, \\
    F(x+a)+F(x)=b.  \\
\end{cases}
\end{equation}
According to the above discussion, it is clear from (\ref{UBCT}) that the condition with permutation property is not necessary.

The UBCT of the inverse function is given by the following lemma.

\begin{lemma}\label{inverse-UBCT}
Let $F(x)=x^{2^n-2}$ be the inverse function over $\mathbb{F}_{2^n}$. Then
\begin{itemize}
\item [\rm (i)] if $a=0$ and $b=0$, then ${\rm UBCT}_{F}(a,b,c)=2^n$;
\item [\rm (iv)] if $a\ne 0$, $b\ne 0$,  then
$${\rm UBCT}_{F}(a,b,c)=
\begin{cases}
    2, &  {\rm if}\,\, c=0\,\, {\rm and}\,\,ab=1\\
       &    {\rm or}\,\, c=0,\,\, ab\ne 1\,\, {\rm and}\,\, {\rm Tr}_{1}^n(\frac{1}{ab})=0 \\
       &  {\rm or}\,\, c\ne 0,\,\, b=c\,\, {\rm and}\,\, ab=1\\
       &    {\rm or}\,\, c\ne 0,\,\, b=c\,\, {\rm and}\,\, {\rm Tr}_{1}^n(\frac{1}{ac})=0;\\
     0, &  {\rm otherwise} \\
\end{cases}
$$
for $n$ is odd; and for $n$ is even,
$${\rm UBCT}_{F}(a,b,c)=
\begin{cases}
     4, &  {\rm if}\,\, c=0\,\, {\rm and}\,\,ab=1\\
        &  {\rm or}\,\, c\ne 0,\,\, b=c\,\, {\rm and}\,\, ab=1\\
        &    {\rm or}\,\, c\ne 0,\,\,b\ne c,\,\, ab=1\,\, {\rm and}\,\, ac^2+abc+b=0;\\
     2, &   {\rm if}\,\, c=0,\,\, ab\ne 1\,\, {\rm and}\,\, {\rm Tr}_{1}^n(\frac{1}{ab})=0 \\
        &  {\rm or}\,\, c\ne 0,\,\, b=c\,\,  {\rm and}\,\, {\rm Tr}_{1}^n(\frac{1}{ac})=0;\\
     0, &  {\rm otherwise}. \\
\end{cases}
$$
\end{itemize}
\end{lemma}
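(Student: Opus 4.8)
The plan is to start from the reformulated system (\ref{UBCT}) and reduce the UBCT to a purely additive count governed by the DDT. First I would observe that adding the first two equations of (\ref{UBCT}) and substituting the third gives $F(y)+F(y+a)=b$; conversely, if $F(x)+F(x+a)=b$, $F(y)+F(y+a)=b$ and $F(x)+F(y)=c$ all hold, then the second equation is automatic since $F(x+a)+F(y+a)=(F(x)+b)+(F(y)+b)=F(x)+F(y)=c$. Hence, writing $S_{a,b}=\{z\in\gf_{2^n}:F(z)+F(z+a)=b\}$ (so that $|S_{a,b}|={\rm DDT}_F(a,b)$) and $T=F(S_{a,b})$, the system is equivalent to ``$x\in S_{a,b}$ and $F(x)+c\in T$''. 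As $F$ is a bijection, $F(x)$ runs bijectively over $T$ as $x$ runs over $S_{a,b}$, which yields the clean identity
\begin{equation*}
{\rm UBCT}_F(a,b,c)=|T\cap(T+c)|.
\end{equation*}
This is the conceptual heart of the argument: it replaces the three-equation system by intersecting a set with its own translate.

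The degenerate entries are then immediate. For $a=0,b=0$ the third defining equation of the UBCT is vacuous and $F^{-1}(F(x))+F^{-1}(F(x+a))=x+(x+a)=a$ holds for every $x$, giving $2^n$. For $a,b\ne 0$ and $c=0$ one reads directly from Definition \ref{definition-UBCT} that the boomerang condition is automatic and ${\rm UBCT}_F(a,b,0)={\rm DDT}_F(a,b)$ (equivalently $T\cap T=T$), so Lemma \ref{inverse-DDT} supplies the stated values: $2$ for $n$ odd, and $4$ when $ab=1$ or $2$ when $ab\ne1$ and ${\rm Tr}_1^n(\tfrac{1}{ab})=0$ for $n$ even.

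The remaining work is the case $a,b,c$ all nonzero, where I would compute $S_{a,b}$ and $T$ explicitly, splitting on whether $ab=1$. When $x\notin\{0,a\}$ the third equation reads $\tfrac{a}{x(x+a)}=b$, i.e. $x^2+ax+\tfrac{a}{b}=0$. If $ab\ne 1$ the points $0,a$ do not lie in $S_{a,b}$, so $S_{a,b}$ consists exactly of the roots of this quadratic, which by Lemma \ref{lemma1-root} number two precisely when ${\rm Tr}_1^n(\tfrac{1}{ab})=0$; for two such roots $x_1,x_2$ one has $F(x_1)+F(x_2)=\tfrac{x_1+x_2}{x_1x_2}=b$, so $T=\{u_1,u_2\}$ with $u_1+u_2=b$. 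Since $c\ne 0$, $|T\cap(T+c)|$ equals $2$ exactly when $T+c=T$, i.e. when $c=u_1+u_2=b$, and is $0$ otherwise; using $b=c$ this is the ``$b=c$ and ${\rm Tr}_1^n(\tfrac{1}{ac})=0$'' line. If instead $ab=1$, then $0,a\in S_{a,b}$; for $n$ odd the quadratic $x^2+ax+a^2=0$ has no root, so $S_{a,b}=\{0,a\}$, $T=\{0,b\}$, and $|T\cap(T+c)|=2$ iff $c=b$. For $n$ even it contributes two further roots $z_1,z_2$ with $z_1+z_2=a$, $z_1z_2=a^2$, whence $T=\{0,b,w_1,w_2\}$ with $w_i=1/z_i$ satisfying $w_1+w_2=b$ and $w_1w_2=b^2$.

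The decisive point for $n$ even with $ab=1$ is that $T=\{0,b,w_1,w_2\}$ is an $\mathbb{F}_2$-subspace of $\gf_{2^n}$: it contains $0$ and is closed under addition, since $b+w_1=w_2$, $b+w_2=w_1$ and $w_1+w_2=b$. Consequently $T+c$ either coincides with $T$ (when $c\in T$) or is disjoint from it, so ${\rm UBCT}_F(a,b,c)=|T\cap(T+c)|$ is $4$ when $c\in T\setminus\{0\}$ and $0$ otherwise. The condition $c\in\{b,w_1,w_2\}$ unwinds to $c=b$ or $c^2+bc+b^2=0$; multiplying the latter by $a$ and using $ab=1$ (so $ab^2=b$) turns it into $ac^2+abc+b=0$, matching the statement, and I would note that when $b=c$ the value $4$ from $ab=1$ takes precedence over the $ab\ne1$ line. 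Collecting the three regimes and recalling ${\rm Tr}_1^n(1)=0$ iff $n$ is even gives exactly the tabulated values. The hard part will be the bookkeeping around the degenerate inputs $0$ and $a$, where $F$ is not literally inversion, together with recognising the subspace structure of $T$ in the $n$-even, $ab=1$ case; once that structure is in hand, the count $|T\cap(T+c)|$ is forced.
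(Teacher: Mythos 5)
Your argument is correct, and it reaches the stated values by a genuinely different route from the paper's. The paper also starts from the system (\ref{UBCT}), but it introduces the auxiliary value $\gamma=\Delta(x)=\Delta(y)$, solves $\gamma x^2+a\gamma x+a=0$ for each $\gamma$, lists all candidate pairs $(x,y)$ — twelve of them in the $n$ even, $a\gamma=1$ case — and substitutes each pair back into the full system to extract the conditions on $b$ and $c$. You instead prove the identity ${\rm UBCT}_F(a,b,c)=|T\cap(T+c)|$ with $T=F(S_{a,b})$ (using that the second equation of (\ref{UBCT}) is implied by the other two together with $y\in S_{a,b}$, and that $F$ is a bijection), and then read off the intersection from the additive structure of $T$: a two-element set with prescribed sum $b$ when $|S_{a,b}|=2$, and a four-element $\F_2$-subspace $\{0,b,w_1,w_2\}$ when $n$ is even and $ab=1$, so that $T+c$ is either equal to or disjoint from $T$. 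This buys a cleaner and less error-prone case analysis (no enumeration of twelve candidate pairs, no back-substitution), and the subspace observation explains conceptually why the entry jumps to $4$ exactly for $c\in T\setminus\{0\}$, i.e.\ $c=b$ or $ac^2+abc+b=0$; the paper's method is more mechanical but requires no structural insight. Both ultimately rest on the same quadratic-root counts via Lemma \ref{lemma1-root} and Lemma \ref{inverse-DDT} for the $c=0$ entries, and you correctly flag the overlap in the statement's case list when $ab=1$ and $b=c$ for even $n$, where the value $4$ is the one that applies.
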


\begin{proof}
See Appendix A.
\end{proof}

Next we determine the LBCT of the inverse function.
Similar to the discussion at the beginning of Lemma \ref{inverse-UBCT}, we present an equivalent formula to compute LBCT of $F(x)$ in the following.

Let $y=F^{-1}(F(x)+d)$ in the first equation $F^{-1}(F(x)+d)+F^{-1}(F(x+b)+d)=b$ in Definition \ref{definition-LBCT}. For one thing, we have $F(x)+F(y)=d$ from the assumption; for the other, we have $F(x+b)+F(y+b)=d$. For the second equation $F(x)+F(x+c)=d$ in Definition \ref{definition-LBCT}, we get $y=x+c$.
Since $F(x)$ is bijective, for $c,d\in \gf_{2^n}^*$, ${\rm LBCT}_{F}(b,c,d)$ equals to the number of solutions of the following equation system
\begin{equation}\label{LBCT}
\begin{cases}
    F(x)+F(y)=d, \\
    F(x+b)+F(y+b)=d, \\
    y=x+c.  \\
\end{cases}
\end{equation}

The LBCT of the inverse function is given by the following lemma.

\begin{lemma}\label{inverse-LBCT}
Let $F(x)=x^{2^n-2}$ be the inverse function over $\mathbb{F}_{2^n}$. Then
\begin{itemize}
\item [\rm (i)] if $c=0$ and $d=0$, then ${\rm LBCT}_{F}(b,c,d)=2^n$;
\item [\rm (iv)] if $c\ne 0$, $d\ne 0$,  then
$${\rm LBCT}_{F}(b,c,d)=
\begin{cases}
    2, &  {\rm if}\,\, b=0\,\, {\rm and}\,\,cd=1\\
       &    {\rm or}\,\, b=0,\,\, cd\ne 1\,\, {\rm and}\,\, {\rm Tr}_{1}^n(\frac{1}{cd})=0 \\
       &  {\rm or}\,\, b\ne 0,\,\, b=c\,\, {\rm and}\,\, cd=1\\
       &    {\rm or}\,\, b\ne 0,\,\, b=c\,\, {\rm and}\,\, {\rm Tr}_{1}^n(\frac{1}{cd})=0;\\
     0, &  {\rm otherwise} \\
\end{cases}
$$
for $n$ is odd; and for $n$ is even,
$${\rm LBCT}_{F}(b,c,d)=
\begin{cases}
     4, &  {\rm if}\,\, b=0\,\, {\rm and}\,\,cd=1\\
        &  {\rm or}\,\, b\ne 0,\,\, b=c\,\, {\rm and}\,\, cd=1\\
        &    {\rm or}\,\, b\ne 0,\,\, b\ne c,\,\, cd=1\,\, {\rm and}\,\, db^2+cdb+c=0;\\
     2, &   {\rm if}\,\, b=0,\,\, cd\ne 1\,\, {\rm and}\,\, {\rm Tr}_{1}^n(\frac{1}{cd})=0 \\
        &  {\rm or}\,\, b\ne 0,\,\, b=c\,\,  {\rm and}\,\, {\rm Tr}_{1}^n(\frac{1}{cd})=0;\\
     0, &  {\rm otherwise}. \\
\end{cases}
$$
\end{itemize}
\end{lemma}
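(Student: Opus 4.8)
The plan is to reduce, following the reformulation already set up in the text preceding the lemma, the computation of ${\rm LBCT}_F(b,c,d)$ to a single self‑intersection. For $c,d\in\gf_{2^n}^*$, substituting $y=x+c$ shows that ${\rm LBCT}_F(b,c,d)$ counts the $x\in\gf_{2^n}$ satisfying both $F(x)+F(x+c)=d$ and $F(x+b)+F(x+b+c)=d$. Writing $S=\{x\in\gf_{2^n}: F(x)+F(x+c)=d\}$, the second equation says exactly $x+b\in S$, so
\[
{\rm LBCT}_F(b,c,d)=|S\cap(S+b)|,\qquad |S|={\rm DDT}_F(c,d).
\]
Thus the task splits into (a) describing $S$ explicitly, and (b) deciding for which shifts $b$ the translate $S+b$ meets $S$ and how large the overlap is. The degenerate item (i) ($c=d=0$) is immediate: then $y=x$ and $F(x)+F(x)=0=d$ for every $x$, giving $2^n$.

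For step (a) I would separate the exceptional inputs $x\in\{0,c\}$, where $F$ differs from $x^{-1}$, from the generic ones. For $x\notin\{0,c\}$ the first equation becomes $dx^2+dcx+c=0$, i.e. $x^2+cx+c/d=0$; by Lemma \ref{lemma1-root} this has two roots (which automatically avoid $0$ and $c$, and sum to $c$) iff ${\rm Tr}_1^n(1/(cd))=0$, and none otherwise. The boundary points $x=0$ and $x=c$ both lie in $S$ iff $cd=1$. Combining these with Lemma \ref{inverse-DDT}: if $cd\ne1$ then $S$ is empty (when ${\rm Tr}_1^n(1/(cd))=1$) or a $2$-set $\{r_1,r_2\}$ with $r_1+r_2=c$; if $cd=1$ then $\{0,c\}\subseteq S$ and the quadratic becomes $x^2+cx+c^2=0$, i.e. $(x/c)^2+(x/c)+1=0$, which is solvable iff ${\rm Tr}_1^n(1)=0$, i.e. iff $n$ is even. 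Hence for $cd=1$ we get $S=\{0,c\}$ when $n$ is odd and $S=c\cdot\{0,1,\omega,\omega^2\}=c\,\gf_4$ when $n$ is even, where $\omega^2+\omega+1=0$.

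For step (b) the crucial structural point is that when $cd=1$ and $n$ is even, $S=c\,\gf_4$ is a $2$‑dimensional $\gf_2$‑subspace; therefore $S\cap(S+b)$ equals $S$ if $b\in S$ and is empty otherwise, so $|S\cap(S+b)|=4$ exactly for $b\in\{0,c,c\omega,c\omega^2\}$, and the membership $b\in\{c\omega,c\omega^2\}$ is precisely the condition $db^2+cdb+c=0$ together with $b\ne0,\ b\ne c$. In every remaining case $|S|\in\{0,2\}$: when $|S|=0$ the intersection is empty; when $S=\{r_1,r_2\}$ with $r_1+r_2=c$, the intersection is nonempty iff $b=s+s'$ for some $s,s'\in S$, i.e. iff $b\in\{0,c\}$, and for both of those shifts $S+b=S$, giving $|S\cap(S+b)|=|S|=2$. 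Feeding the values $|S|\in\{0,2,4\}$ from step (a) into these counts reproduces every line of both the odd‑$n$ and even‑$n$ tables.

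I expect the main obstacle to be the bookkeeping at the exceptional points $x\in\{0,c\}$ in step (a): these are exactly the inputs where $F(x)=x^{2^n-2}$ fails to equal $x^{-1}$, and they are what generate the entries with $b\notin\{0,c\}$ occurring only for even $n$; dropping them would spuriously force $b\in\{0,c\}$ in all cases. The second delicate point is recognizing the $\gf_4$‑coset structure of $S$ when $cd=1$ and $n$ is even, which is precisely what promotes the two extra roots $c\omega,c\omega^2$ to additional shifts $b$ with full overlap. As an independent consistency check one may invoke the general identity ${\rm LBCT}_F(b,c,d)={\rm UBCT}_{F^{-1}}(d,c,b)$ together with the fact that the inverse function is an involution ($F^{-1}=F$), so that the statement also follows directly from Lemma \ref{inverse-UBCT} evaluated at $(a,b,c)=(d,c,b)$.
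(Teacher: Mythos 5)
Your proof is correct, and it is worth noting that the paper itself omits the proof of this lemma, stating only that it is ``similar'' to the UBCT computation in Appendix A. Your route is genuinely cleaner than that template. The paper's UBCT argument enumerates candidate pairs $(x,y)$ solving $\Delta(x)=\Delta(y)=\gamma$ and substitutes each back into the full system; the direct analogue for the LBCT would do the same. You instead observe that once $y=x+c$ is forced, the count collapses to $|S\cap(S+b)|$ with $S=\{x:F(x)+F(x+c)=d\}$, and then everything follows from the structure of $S$: it is $\{0,c\}$ or the $\gf_2$-subspace $c\,\gf_4$ when $cd=1$ (odd/even $n$ respectively), and otherwise empty or a two-element set $\{r_1,r_2\}$ with $r_1+r_2=c$, so the only shifts $b$ with nonempty overlap are $b\in S$ in the subspace case and $b\in\{0,c\}$ in the two-element case. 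This replaces the paper's twelve-pair substitution bookkeeping with one structural observation, and it makes transparent \emph{why} the extra entries $b\notin\{0,c\}$ (the roots of $db^2+cdb+c=0$) appear only for even $n$: they are exactly the two nontrivial elements of $c\,\gf_4$. Your closing remark that ${\rm LBCT}_F(b,c,d)={\rm UBCT}_{F^{-1}}(d,c,b)$, combined with $F^{-1}=F$ for the inverse function, is a second, even shorter derivation from Lemma~\ref{inverse-UBCT} that the paper does not exploit; it checks out (substituting $x=F(u)$ in the definition turns one table into the other) and matches the stated formulas term by term. One small point your argument surfaces: in the even-$n$ table the case ``$b\ne0$, $b=c$, ${\rm Tr}_1^n(\frac{1}{cd})=0$'' giving $2$ must implicitly exclude $cd=1$ (where the trace condition holds automatically and the correct value is $4$); your derivation handles this correctly, but you should state the exclusion explicitly.
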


\begin{proof}
The proof methods for the LBCT and UBCT of $F(x)$ are similar, and we will omit this proof in the paper for brevity.
\end{proof}

Utilizing the results in Lemma \ref{inverse-UBCT} and Lemma \ref{inverse-LBCT}, we present a lemma we shall need later on.

\begin{lemma}\label{inverse-hard}
Let $F(x)=x^{2^n-2}$ be the inverse function over $\mathbb{F}_{2^n}$. Then $F(x)$ is hard when $n$ is odd.
\end{lemma}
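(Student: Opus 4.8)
The plan is to reduce hardness to the vanishing of the off-diagonal part of the sum in Property \ref{pro-DBCT}, and then to read that vanishing directly off Lemma \ref{inverse-UBCT} (and, symmetrically, Lemma \ref{inverse-LBCT}). By Property \ref{pro-DBCT}, for $a,d\in\gf_{2^n}^*$ one may split
$$
{\rm DBCT}_{F}(a,d)=\sum_{b=c}{\rm UBCT}_{F}(a,b,c)\,{\rm LBCT}_{F}(b,c,d)+\sum_{b\ne c}{\rm UBCT}_{F}(a,b,c)\,{\rm LBCT}_{F}(b,c,d),
$$
and the diagonal sum already equals $\sum_{b}{\rm DDT}_{F}(a,b)\,{\rm DDT}_{F}(b,d)$. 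Comparing with Definition \ref{definition-Hard}, proving that $F$ is hard is therefore exactly equivalent to showing that the off-diagonal sum is zero. Since every UBCT and LBCT entry is a nonnegative integer, this sum is a sum of nonnegative terms, so it suffices to show that ${\rm UBCT}_{F}(a,b,c)\cdot{\rm LBCT}_{F}(b,c,d)=0$ for every pair $(b,c)$ with $b\ne c$.

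Next I would fix $a,d\in\gf_{2^n}^*$ together with $b\ne c$ and verify the product vanishes, splitting according to whether the shared coordinates are zero. The decisive point is that, for odd $n$, Lemma \ref{inverse-UBCT}(iv) lists the nonzero values of ${\rm UBCT}_{F}(a,b,c)$ with $c\ne0$ only in subcases that require $b=c$; hence whenever $c\ne0$, $b\ne0$ and $b\ne c$ we get ${\rm UBCT}_{F}(a,b,c)=0$ and the product is zero. The two boundary configurations not covered by these ``(iv)'' cases are handled directly from the definitions: if $c=0$ (so $b\ne0$), the second equation of Definition \ref{definition-LBCT} reads $F(x)+F(x+c)=d$, i.e.\ $0=d$, which is impossible as $d\ne0$, so ${\rm LBCT}_{F}(b,0,d)=0$; and if $b=0$ (so $c\ne0$), the third equation of Definition \ref{definition-UBCT} reads $F(x)+F(x+a)=0$, forcing $F(x)=F(x+a)$, impossible for the bijection $F$ with $a\ne0$, so ${\rm UBCT}_{F}(a,0,c)=0$. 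In every case one factor vanishes, so the off-diagonal sum is zero and the equalities of Definition \ref{definition-Hard} hold. A symmetric route is available through Lemma \ref{inverse-LBCT}(iv), which shows that for odd $n$ the entry ${\rm LBCT}_{F}(b,c,d)$ with $b\ne0$ is nonzero only when $b=c$.

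The step that genuinely carries the argument is the structural reading of the odd-$n$ formula in Lemma \ref{inverse-UBCT}: nonzero off-diagonal mass is exactly what must be excluded, so the one thing I would check with care is that odd $n$ really leaves no $b\ne c$ with nonzero UBCT. This is precisely where parity enters, and it is instructive to see why the claim fails for even $n$: Lemma \ref{inverse-UBCT}(iv) then contains the extra family $c\ne0$, $b\ne c$, $ab=1$, $ac^2+abc+b=0$ with ${\rm UBCT}_{F}(a,b,c)=4$, i.e.\ honest off-diagonal contributions, and the same family appears in Lemma \ref{inverse-LBCT}(iv), so a matching pair can make the cross term strictly positive. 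Thus the main obstacle is not any computation but confirming the absence of such an off-diagonal family in the odd case; once that is read off from the two lemmas, the conclusion is immediate.
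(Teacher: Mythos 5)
Your proposal is correct and follows essentially the same route as the paper: the paper's proof is a two-line observation that, by Lemmas \ref{inverse-UBCT} and \ref{inverse-LBCT}, ${\rm dbct}(a,b,c,d)$ is nonzero only when $b=c$ for odd $n$, whence hardness follows from Definition \ref{definition-Hard}. You simply make explicit the details the paper leaves implicit (nonnegativity of the entries, the boundary cases $b=0$ and $c=0$, and the contrast with the even-$n$ family $ab=1$, $ac^2+abc+b=0$), all of which check out.
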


\begin{proof}
According to the Lemma \ref{inverse-UBCT} and Lemma \ref{inverse-LBCT}, we have that the ${\rm dbct}(a,b,c,d)$ of $F(x)$ has values only if $b=c$ when $n$ is odd. By the Definition \ref{definition-Hard}, it is obvious that $F(x)$ is hard when $n$ is odd.
\end{proof}

With the above preparations, we give the DBCT of the inverse function in the following theorem.

\begin{theorem}\label{inverse-DBCT}
Let $F(x)=x^{2^n-2}$ be the inverse function over $\mathbb{F}_{2^n}$. For $a,d\in \gf_{2^n}$,
\begin{itemize}
\item [\rm (i)] if $a=0$ or $d=0$, then
$${\rm DBCT}_{F}(a,d)=2^{2n};$$
\item [\rm (i)] if $a\ne 0$ and $d\ne 0$, then
$${\rm DBCT}_{F}(a,d)=
\begin{cases}
     2^{n+1}, &  {\rm if}\,\, a=d;\\
     2^n+4,   &   {\rm if}\,\, a\ne d\,\, {\rm and}\,\, {\rm Tr}_{1}^n(\frac{a}{d})={\rm Tr}_{1}^n(\frac{d}{a})=0;\\
      2^n,  &   {\rm if}\,\, a\ne d,\,\, {\rm Tr}_{1}^n(\frac{a}{d})=0\,\, {\rm and}\,\, {\rm Tr}_{1}^n(\frac{d}{a})=1\\
            &    {\rm or}\,\, a\ne d,\,\, {\rm Tr}_{1}^n(\frac{a}{d})=1\,\, {\rm and}\,\, {\rm Tr}_{1}^n(\frac{d}{a})=0;\\
     2^n-4, &  {\rm if}\,\,a\ne d\,\, {\rm and}\,\, {\rm Tr}_{1}^n(\frac{a}{d})={\rm Tr}_{1}^n(\frac{d}{a})=1 \\
\end{cases}
$$
for $n>1$ is odd. For $n>2$ is even, if $n\equiv 0\,\, ({\rm mod}\,\,4)$, then
$${\rm DBCT}_{F}(a,d)=
\begin{cases}
     2^{n+1}+8, &  {\rm if}\,\, a=d;\\
     2^n+20, &  {\rm if}\,\,a\ne d\,\, {\rm and}\,\, a^3=d^3;\\
     2^n+4,   &   {\rm if}\,\,a^3\ne d^3\,\,{\rm and}\,\, {\rm Tr}_{1}^n(\frac{a}{d})={\rm Tr}_{1}^n(\frac{d}{a})=0;\\
      2^n,  &   {\rm if}\,\, a^3\ne d^3,\,\, {\rm Tr}_{1}^n(\frac{a}{d})=0\,\, {\rm and}\,\, {\rm Tr}_{1}^n(\frac{d}{a})=1\\
            &    {\rm or}\,\,a^3\ne d^3,\,\,  {\rm Tr}_{1}^n(\frac{a}{d})=1\,\, {\rm and}\,\, {\rm Tr}_{1}^n(\frac{d}{a})=0;\\
     2^n-4, &  {\rm if}\,\,a^3\ne d^3\,\, {\rm and}\,\, {\rm Tr}_{1}^n(\frac{a}{d})={\rm Tr}_{1}^n(\frac{d}{a})=1,\\
\end{cases}
$$
if $n\equiv 2\,\, ({\rm mod}\,\,4)$, then
$${\rm DBCT}_{F}(a,d)=
\begin{cases}
     2^{n+1}+8, &  {\rm if}\,\, a=d;\\
     2^n+12, &  {\rm if}\,\,a\ne d\,\, {\rm and}\,\, a^3=d^3;\\
     2^n-4,   &   {\rm if}\,\,a^3\ne d^3\,\, {\rm and}\,\, {\rm Tr}_{1}^n(\frac{a}{d})={\rm Tr}_{1}^n(\frac{d}{a})=1;\\
      2^n,  &   {\rm if}\,\, a^3\ne d^3,\,\, {\rm Tr}_{1}^n(\frac{a}{d})=0\,\, {\rm and}\,\, {\rm Tr}_{1}^n(\frac{d}{a})=1\\
            &    {\rm or}\,\,a^3\ne d^3,\,\,  {\rm Tr}_{1}^n(\frac{a}{d})=1\,\, {\rm and}\,\, {\rm Tr}_{1}^n(\frac{d}{a})=0;\\
     2^n+4, &  {\rm if}\,\,a^3\ne d^3\,\, {\rm and}\,\,{\rm Tr}_{1}^n(\frac{a}{d})={\rm Tr}_{1}^n(\frac{d}{a})=0.\\
\end{cases}
$$
\end{itemize}
\end{theorem}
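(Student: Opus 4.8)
The plan is to start from the defining sum ${\rm DBCT}_F(a,d)=\sum_{b,c}{\rm UBCT}_F(a,b,c)\cdot{\rm LBCT}_F(b,c,d)$ of Definition \ref{definition-DBCT} and to split it, exactly as in Property \ref{pro-DBCT}, into its diagonal part $b=c$ and its off-diagonal part $b\ne c$:
\[
{\rm DBCT}_F(a,d)=\underbrace{\sum_{b}{\rm DDT}_F(a,b)\,{\rm DDT}_F(b,d)}_{S}+\underbrace{\sum_{b\ne c}{\rm UBCT}_F(a,b,c)\,{\rm LBCT}_F(b,c,d)}_{R},
\]
where the identification of the diagonal part with $\sum_b{\rm DDT}_F(a,b)\,{\rm DDT}_F(b,d)$ is precisely the chain of equalities in Property \ref{pro-DBCT}. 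The cases $a=0$ or $d=0$ are immediate from Property \ref{pro-DBCT} and give $2^{2n}$, so from here on I assume $a,d\in\gf_{2^n}^*$.

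The core computation is the diagonal sum $S$, which I would evaluate using Lemma \ref{inverse-DDT}; since $a,d\ne0$, only $b\ne0$ contributes. I would isolate the two exceptional indices $b=1/a$ and $b=1/d$, at which one factor takes the special value attached to the case $ab=1$ (namely $2$ for odd $n$ and $4$ for even $n$), and treat every other $b$ as generic, where both factors are governed purely by the trace conditions of Lemma \ref{inverse-DDT}(v). The substitution $s=1/b$ turns the generic requirement ${\rm Tr}_1^n(1/(ab))={\rm Tr}_1^n(1/(bd))=0$ into the pair of linear conditions ${\rm Tr}_1^n(s/a)={\rm Tr}_1^n(s/d)=0$, so the generic count becomes the size of an intersection of hyperplanes: $2^{n-1}$ when $a=d$ and $2^{n-2}$ when $a\ne d$, before removing the forbidden points $s\in\{0,a,d\}$. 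Membership of $s=a$ and $s=d$ in this intersection is controlled by ${\rm Tr}_1^n(1)$ and by ${\rm Tr}_1^n(a/d),{\rm Tr}_1^n(d/a)$, which is exactly where the parity of $n$ enters. Running the inclusion–exclusion over $\{0,a,d\}$, the exceptional and generic contributions combine into the closed forms $S=2^{n+1}$ (odd $n$) resp.\ $S=2^{n+1}+8$ (even $n$) when $a=d$, and $S=2^n-4+4[{\rm Tr}_1^n(a/d)=0]+4[{\rm Tr}_1^n(d/a)=0]$ in both parities when $a\ne d$; for even $n$ the larger ``$ab=1$'' value $4$ exactly compensates the two additional removed points, which is why the $a\ne d$ formula turns out to be parity-independent.

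It then remains to determine $R$. For odd $n$, Lemma \ref{inverse-hard} gives that $F$ is hard, so $R=0$ and ${\rm DBCT}_F(a,d)=S$, which already yields the odd-$n$ formula. For even $n$ I would read off from Lemma \ref{inverse-UBCT} and Lemma \ref{inverse-LBCT} that the only off-diagonal nonzero entries are ${\rm UBCT}_F(a,b,c)=4$, which forces $b=1/a$ together with $ac^2+abc+b=0$, equivalent (using $ab=1$) to $(ac)^2+ac+1=0$; and ${\rm LBCT}_F(b,c,d)=4$, which forces $c=1/d$ together with $(db)^2+db+1=0$. Hence $ac$ and $db$ must both be primitive cube roots of unity $\omega,\omega^2$, which exist precisely for even $n$. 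Intersecting $b=1/a$, $c=1/d$ with $ac,db\in\{\omega,\omega^2\}$ shows that a common off-diagonal pair $(b,c)$ exists if and only if $d/a\in\{\omega,\omega^2\}$, i.e.\ $a\ne d$ but $a^3=d^3$, in which case there is exactly one such pair, contributing $R=4\cdot4=16$.

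Finally I would combine $S$ and $R$. When $a^3\ne d^3$ the correction vanishes and ${\rm DBCT}_F(a,d)=S$ yields the stated values through the two traces. When $a\ne d$ and $a^3=d^3$ one has $d/a\in\{\omega,\omega^2\}$, so ${\rm Tr}_1^n(a/d)={\rm Tr}_1^n(d/a)={\rm Tr}_1^n(\omega)$, and I would evaluate this common value via the trace tower ${\rm Tr}_1^n={\rm Tr}_1^2\circ{\rm Tr}_2^n$ on $\gf_4\subseteq\gf_{2^n}$: since $\omega^4=\omega$ one gets ${\rm Tr}_2^n(\omega)=(n/2)\omega$, whence ${\rm Tr}_1^n(\omega)=0$ for $n\equiv0\ (\mathrm{mod}\ 4)$ and ${\rm Tr}_1^n(\omega)=1$ for $n\equiv2\ (\mathrm{mod}\ 4)$; adding $R=16$ to $S=2^n+4$ (resp.\ $S=2^n-4$) produces $2^n+20$ (resp.\ $2^n+12$), matching the two even sub-cases. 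The main obstacle will be the bookkeeping in the diagonal sum $S$: correctly separating the exceptional indices $b\in\{1/a,1/d\}$ from the generic ones and running the inclusion–exclusion over $\{0,a,d\}$ so that the special DDT values and hyperplane counts collapse to the advertised closed forms. Verifying that $R$ equals exactly $16$ (a single admissible pair $(b,c)$) is the secondary delicate point.
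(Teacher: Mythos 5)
Your proposal is correct and follows essentially the same route as the paper: the same split into the diagonal part (which by Lemma \ref{inverse-hard} and Property \ref{pro-DBCT} reduces to the DDT convolution, evaluated via Lemma \ref{inverse-DDT} by isolating $b\in\{1/a,1/d\}$ and counting the generic $b$ through the two trace conditions) and the off-diagonal part (which by Lemmas \ref{inverse-UBCT} and \ref{inverse-LBCT} contributes exactly $16$ precisely when $n$ is even, $a\ne d$ and $a^3=d^3$, with the sign of ${\rm Tr}_1^n(\omega)$ separating $n\equiv0$ from $n\equiv2\pmod4$). The only differences are cosmetic — you count the generic $b$ by an inclusion–exclusion over hyperplanes rather than by the paper's explicit character sums, and you evaluate ${\rm Tr}_1^n(\omega)$ via the trace tower rather than by grouping conjugates — and your observation that the diagonal sum for $a\ne d$ is parity-independent is consistent with the paper's case-by-case totals.
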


\begin{proof}
According to the Property \ref{pro-DBCT}, for $a=0$ or $d=0$, we have
$$
{\rm DBCT}(0,d)=\sum\limits_{c}{\rm UBCT}_{F}(0,0,c)\cdot {\rm LBCT}_{F}(0,c,d)=2^{2n},
$$
$$
{\rm DBCT}(a,0)=\sum\limits_{b}{\rm UBCT}_{F}(a,b,0)\cdot {\rm LBCT}_{F}(b,0,0)=2^{2n}.
$$
Next, we assume that $a\ne 0$ and $d\ne 0$. We first consider the case of $n>1$ is odd.

By Lemma \ref{inverse-hard}, we have that $F(x)$ is hard when $n$ is odd. Then we have
\begin{equation*}
\begin{aligned}
{\rm DBCT}_{F}(a,d)&=\sum\limits_{b} {\rm DDT}_{F}(a,b)\cdot {\rm DDT}_{F}(b,d)\\
&=\sum\limits_{b\ne 0} {\rm DDT}_{F}(a,b)\cdot {\rm DDT}_{F}(b,d)+\sum\limits_{b=0} {\rm DDT}_{F}(a,b)\cdot {\rm DDT}_{F}(b,d)\\
&=\sum\limits_{b\ne 0} {\rm DDT}_{F}(a,b)\cdot {\rm DDT}_{F}(b,d),
\end{aligned}
\end{equation*}
since ${\rm DDT}_{F}(a,0)=0$ and ${\rm DDT}_{F}(0,b)=0$. Thus the entries of DBCT can be computed from DDT entries.
By the DDT of $F(x)$ in Lemma \ref{inverse-DDT}, 
for $a,d\in \gf_{2^n}^*$, we discuss the following two cases.

{\textbf{Case 1:}} If $a=d$, assume that there exists a $b\in \gf_{2^n}^*$ such that $ab=bd=1$. Then we have
\begin{equation*}
\begin{aligned}
{\rm DBCT}_{F}(a,a)&={\rm DDT}_{F}(a,\frac{1}{a})\cdot {\rm DDT}_{F}(\frac{1}{a},a)+\sum\limits_{b \in \gf_{2^n}\backslash \{0, \frac{1}{a}\}} {\rm DDT}_{F}(a,b)\cdot {\rm DDT}_{F}(b,a)\\
&=2\times 2+2\times 2\times |\{b \in \gf_{2^n}\backslash \{0, \frac{1}{a}\}: {\rm Tr}_1^n(\frac{1}{ab})=0\}|\\
&=4+4\times (2^{n-1}-1)\\
&=2^{n+1}.
\end{aligned}
\end{equation*}

{\textbf{Case 2:}} If $a\ne d$, assume that there exists a $b\in \gf_{2^n}^*$ such that $ab=1$ or $bd=1$. Then we have
\begin{equation}\label{d-n-odd}
\begin{aligned}
{\rm DBCT}_{F}(a,d)=&{\rm DDT}_{F}(a,\frac{1}{a})\cdot {\rm DDT}_{F}(\frac{1}{a},d)+{\rm DDT}_{F}(a,\frac{1}{d})\cdot {\rm DDT}_{F}(\frac{1}{d},d)\\
&+\sum\limits_{b \in \gf_{2^n}\backslash \{0, \frac{1}{a}, \frac{1}{d}\}} {\rm DDT}_{F}(a,b)\cdot {\rm DDT}_{F}(b,d)\\
=&{\rm DDT}_{F}(a,\frac{1}{a})\cdot {\rm DDT}_{F}(\frac{1}{a},d)+{\rm DDT}_{F}(a,\frac{1}{d})\cdot {\rm DDT}_{F}(\frac{1}{d},d)\\
&+2\times 2\times |\{b \in \gf_{2^n}\backslash \{0, \frac{1}{a}, \frac{1}{d}\}:{\rm Tr}_1^n(\frac{1}{ab})=0\,\, {\rm and}\,\,{\rm Tr}_1^n(\frac{1}{bd})=0\}|.\\
\end{aligned}
\end{equation}

We define
$$
N(b)=\{b \in \gf_{2^n}\backslash \{0, \frac{1}{a}, \frac{1}{d}\}: {\rm Tr}_{1}^n(\frac{1}{ab})=0\,\, {\rm and}\,\, {\rm Tr}_{1}^n(\frac{1}{bd})=0\},
$$
then we have
\begin{equation*}
\begin{aligned}
|N(b)|=&\sum\limits_{b\in \gf_{2^n}}\Big(\frac{1}{2}\sum\limits_{y_1\in \gf_{2}}((-1)^{{\rm Tr}_{1}^n(\frac{1}{bd})\cdot y_1})\cdot \frac{1}{2}\sum\limits_{y_2\in \gf_{2}}((-1)^{{\rm Tr}_{1}^n(\frac{1}{ab})\cdot y_2})\Big)\\
&-(\frac{1}{2}\sum\limits_{y_1\in \gf_{2}}((-1)^{{\rm Tr}_{1}^n(1)\cdot y_1})(\frac{1}{2}\sum\limits_{y_2\in \gf_{2}}((-1)^{{\rm Tr}_{1}^n(\frac{a}{d})\cdot y_2})\\
&-(\frac{1}{2}\sum\limits_{y_1\in \gf_{2}}((-1)^{{\rm Tr}_{1}^n(\frac{d}{a})\cdot y_1})(\frac{1}{2}\sum\limits_{y_2\in \gf_{2}}((-1)^{{\rm Tr}_{1}^n(1)\cdot y_2})-1\\
=&\frac{1}{4}\sum\limits_{b\in \gf_{2^n}}\Big((1+(-1)^{{\rm Tr}_{1}^n(\frac{1}{ab})})(1+(-1)^{{\rm Tr}_{1}^n(\frac{1}{db})})\Big)-\frac{1}{2}(1-1)-\frac{1}{2}(1-1)-1\\
=&2^{n-2}+\frac{1}{4}(\sum\limits_{b\in \gf_{2^n}}(-1)^{{\rm Tr}_{1}^n(\frac{1}{bd})}+\sum\limits_{b\in \gf_{2^n}}(-1)^{{\rm Tr}_{1}^n(\frac{1}{ab})}+\sum\limits_{b\in \gf_{2^n}}(-1)^{{\rm Tr}_{1}^n(\frac{1}{ab}+\frac{1}{bd})})-1\\
=&2^{n-2}-1.
\end{aligned}
\end{equation*}

In the sequel we should distinguish three subcases.

{\textbf{Subcase 1:}} If ${\rm Tr}_{1}^n(\frac{a}{d})={\rm Tr}_{1}^n(\frac{d}{a})=0$. By (\ref{d-n-odd}), we have ${\rm DDT}_{F}(\frac{1}{a},d)={\rm DDT}_{F}(a,\frac{1}{d})=2$. Thus we obtain
\begin{equation*}
\begin{aligned}
{\rm DBCT}_{F}(a,d)&=2\times 2\times 2+(2^{n-2}-1)\times 2\times 2\\
&=2^n+4.
\end{aligned}
\end{equation*}

{\textbf{Subcase 2:}} If ${\rm Tr}_{1}^n(\frac{a}{d})=0$ and ${\rm Tr}_{1}^n(\frac{d}{a})=1$ or ${\rm Tr}_{1}^n(\frac{a}{d})=1$ and ${\rm Tr}_{1}^n(\frac{d}{a})=0$. By (\ref{d-n-odd}), we have ${\rm DDT}_{F}(\frac{1}{a},d)=2$ and ${\rm DDT}_{F}(a,\frac{1}{d})=0$ or ${\rm DDT}_{F}(\frac{1}{a},d)=0$ and ${\rm DDT}_{F}(a,\frac{1}{d})=2$. Then we further have
\begin{equation*}
\begin{aligned}
{\rm DBCT}_{F}(a,d)&=2\times 2+(2^{n-2}-1)\times 2\times 2\\
&=2^n.
\end{aligned}
\end{equation*}

{\textbf{Subcase 3:}} If ${\rm Tr}_{1}^n(\frac{a}{d})={\rm Tr}_{1}^n(\frac{d}{a})=1$. By (\ref{d-n-odd}), we have ${\rm DDT}_{F}(\frac{1}{a},d)={\rm DDT}_{F}(a,\frac{1}{d})=0$. Hence, we get
\begin{equation*}
\begin{aligned}
{\rm DBCT}_{F}(a,d)&=(2^{n-2}-1)\times 2\times 2\\
&=2^n-4.
\end{aligned}
\end{equation*}

Now we consider the case of $n>2$ is even.

By Lemmas \ref{inverse-UBCT} and \ref{inverse-LBCT}, we have that $F(x)$ is not hard when $n$ is even. According to Definition \ref{definition-DBCT}, we have
$$
\begin{aligned}
{\rm DBCT}_{F}(a,d)&=\sum\limits_{b=c} {\rm dbct}(a,b,c,d)+\sum\limits_{b\ne c} {\rm dbct}(a,b,c,d)\\
&=\sum\limits_{b=c} {\rm UBCT}(a,b,c)\cdot {\rm LBCT}(b,c,d)+\sum\limits_{b\ne c} {\rm UBCT}(a,b,c)\cdot {\rm LBCT}(b,c,d).
\end{aligned}
$$
Thus the DBCT of $F(x)$ can be expressed as the sum of two parts.

Firstly, we consider the sum of the part when $b\ne c$.

For $a,d \in \gf_{2^n}^*$ with $b\ne c$, Lemmas \ref{inverse-UBCT} and \ref{inverse-LBCT} establish that ${\rm UBCT}(a,b,c)=4$ only if $ab=1$, $ac^2+abc+b=0$; otherwise, ${\rm UBCT}(a,b,c)=0$. Similarly, ${\rm LBCT}(b,c,d)=4$ iff $cd=1$, $db^2+dcb+c=0$; otherwise ${\rm LBCT}(b,c,d)=0$. Therefore, both UBCT and LBCT of $F(x)$ take non-zero values simultaneously when $ab=1$, $cd=1$, $ac^2+abc+b=0$ and $db^2+dcb+c=0$.
Under these conditions, it is evident that $a\ne d$ and $a^3=d^3$.
Let $\omega$ be a primitive element of $\gf_{2^n}$ and assume that $a=\omega^i$, $d=\omega^j$ with $i,j\in [0\dots 2^n-2]$ and $i\ne j$. The equation $a^3=d^3$ is equivalent to the congruence $3j\equiv 3i\,\, ({\rm mod}\,\, 2^n-1)$, which can be reduced to
$3(j-i)\equiv 0\,\, ({\rm mod}\,\, 2^n-1)$.
Since ${\rm gcd}(3, 2^n-1)=3$ for even $n$, besides $i=j$, the above equation also has two other solutions in $\gf_{2^n}$: $j-i=\frac{2^n-1}{3}$ and $j-i=\frac{2(2^n-1)}{3}$. Hence, for a given $d$, there are two distinct values of $a$ such that $a^3=d^3$.
For $a,d\in \gf_{2^n}^*$ with $a\ne d$, let $s=\frac{2^n-1}{3}$. Then we have $\omega^{-s}=\omega^{3\cdot \frac{2^n-1}{3}}\cdot \omega^{-\frac{2^n-1}{3}}=\omega^{2\cdot \frac{2^n-1}{3}}=\omega^{2s}$. Similarly, we also get $\omega^{-2s}=\omega^s$.
Hence, we observe that ${\rm Tr}_{1}^n(\omega^{-2s})={\rm Tr}_{1}^n(\omega^s)={\rm Tr}_{1}^n(\omega^{2s})={\rm Tr}_{1}^n(\omega^{-s})={\rm Tr}_{1}^n(\frac{a}{d})={\rm Tr}_{1}^n(\frac{d}{a})$. This implies that ${\rm Tr}_{1}^n(\frac{d}{a})$ and ${\rm Tr}_{1}^n(\frac{a}{d})$ are either both equal to $0$ or $1$ when $a^3=d^3$ and $a\ne d$. Let $\omega^s=\alpha$, then we obtain that $\alpha^3=1$. This implies that $\alpha^2+\alpha+1=0$ since $\alpha\ne 1$. Next we consider the value of ${\rm Tr}_{1}^n(\alpha)$. Since
$$
\begin{aligned}
{\rm Tr}_{1}^n(\alpha)&=\alpha+\alpha^2+\alpha^{2^2}+\alpha^{2^3}+\cdots+\alpha^{2^{n-1}}\\
&=(\alpha+\alpha^2)+(\alpha+\alpha^2)^{2^2}+(\alpha+\alpha^2)^{2^4}+\cdots+(\alpha+\alpha^2)^{2^{2i}},
\end{aligned}
$$
where $i$ is a positive integer. Then we have $2i+1=n-1$, that is, $n=2i+2$. One immediately concludes that ${\rm Tr}_{1}^n(\alpha)=1$ and ${\rm Tr}_{1}^n(\alpha)=0$ when $i$ is even and $i$ is odd, respectively. Summarizing the above discussion, when $a^3=d^3$ and $a\ne d$,
we have ${\rm Tr}_{1}^n(\frac{a}{d})={\rm Tr}_{1}^n(\frac{d}{a})=1$ when $n\equiv 2\,\, ({\rm mod}\,\, 4)$ and ${\rm Tr}_{1}^n(\frac{a}{d})={\rm Tr}_{1}^n(\frac{d}{a})=0$ when $n\equiv 0\,\, ({\rm mod}\,\, 4)$.
If $a$ and $d$ satisfy the above conditions, then we have
$$
{\rm DBCT}_{F}(a,d)=\sum\limits_{b\ne c} {\rm UBCT}(a,b,c)\cdot {\rm LBCT}(b,c,d)=4\times 4=16.
$$

Next, we consider the sum of the part when $b=c$.

For $a,d \in \gf_{2^n}^*$, if $b=c$, we have
$$
\begin{aligned}
{\rm DBCT}_{F}(a,d)&=\sum\limits_{b=c} {\rm UBCT}(a,b,c)\cdot {\rm LBCT}(b,c,d)\\
&=\sum\limits_{b} {\rm DDT}_{F}(a,b)\cdot {\rm DDT}_{F}(b,d)\\
\end{aligned}
$$
This means that in this case the UBCT and LBCT for computing DBCT degenerate to DDT.

Since ${\rm DDT}_{F}(a,0)=0$ and ${\rm DDT}_{F}(0,b)=0$, we assume that $a,b,d\ne 0$ as follows. By the DDT of $F(x)$ in Lemma \ref{inverse-DDT},
for $a,d\in \gf_{2^n}^*$, we discuss the following two cases.

{\textbf{Case 1:}} If $a=d$, assume that there exists a $b\in \gf_{2^n}^*$ such that $ab=bd=1$. Then we have
\begin{equation*}
\begin{aligned}
{\rm DBCT}_{F}(a,a)&={\rm DDT}_{F}(a,\frac{1}{a})\cdot {\rm DDT}_{F}(\frac{1}{a},a)+\sum\limits_{b \in \gf_{2^n}\backslash \{0, \frac{1}{a}\}} {\rm DDT}_{F}(a,b)\cdot {\rm DDT}_{F}(b,a)\\
&=4\times 4+2\times 2\times |\{b \in \gf_{2^n}\backslash \{0, \frac{1}{a}\}: {\rm Tr}_1^n(\frac{1}{ab})=0\}|\\
&=16+4\times (2^{n-1}-2)\\
&=2^{n+1}+8.
\end{aligned}
\end{equation*}

{\textbf{Case 2:}} If $a\ne d$, assume that there exists a $b\in \gf_{2^n}^*$ such that $ab=1$ or $bd=1$. Then we get
\begin{equation}\label{d-n-even}
\begin{aligned}
{\rm DBCT}_{F}(a,d)=&{\rm DDT}_{F}(a,\frac{1}{a})\cdot {\rm DDT}_{F}(\frac{1}{a},d)+{\rm DDT}_{F}(a,\frac{1}{d})\cdot {\rm DDT}_{F}(\frac{1}{d},d)\\
&+\sum\limits_{b \in \gf_{2^n}\backslash \{0, \frac{1}{a}, \frac{1}{d}\}} {\rm DDT}_{F}(a,b)\cdot {\rm DDT}_{F}(b,a)\\
=&{\rm DDT}_{F}(a,\frac{1}{a})\cdot {\rm DDT}_{F}(\frac{1}{a},d)+{\rm DDT}_{F}(a,\frac{1}{d})\cdot {\rm DDT}_{F}(\frac{1}{d},d)\\
&+2\times 2\times |\{b \in \gf_{2^n}\backslash \{0, \frac{1}{a}, \frac{1}{d}\}:{\rm Tr}_1^n(\frac{1}{ab})=0\,\, {\rm and}\,\,{\rm Tr}_1^n(\frac{1}{bd})=0\}|.\\
\end{aligned}
\end{equation}

We also define
$$
N(b)=\{b \in \gf_{2^n}\backslash \{0, \frac{1}{a}, \frac{1}{d}\}: {\rm Tr}_{1}^n(\frac{1}{ab})=0\,\, {\rm and}\,\, {\rm Tr}_{1}^n(\frac{1}{bd})=0\},
$$
then we obtain
\begin{equation*}
\begin{aligned}
|N(b)|=&\sum\limits_{b\in \gf_{2^n}}\Big(\frac{1}{2}\sum\limits_{y_1\in \gf_{2}}((-1)^{{\rm Tr}_{1}^n(\frac{1}{bd})\cdot y_1})\cdot \frac{1}{2}\sum\limits_{y_2\in \gf_{2}}((-1)^{{\rm Tr}_{1}^n(\frac{1}{ab})\cdot y_2})\Big)\\
&-(\frac{1}{2}\sum\limits_{y_1\in \gf_{2}}((-1)^{{\rm Tr}_{1}^n(1)\cdot y_1})(\frac{1}{2}\sum\limits_{y_2\in \gf_{2}}((-1)^{{\rm Tr}_{1}^n(\frac{a}{d})\cdot y_2})\\
&-(\frac{1}{2}\sum\limits_{y_1\in \gf_{2}}((-1)^{{\rm Tr}_{1}^n(\frac{d}{a})\cdot y_1})(\frac{1}{2}\sum\limits_{y_2\in \gf_{2}}((-1)^{{\rm Tr}_{1}^n(1)\cdot y_2})-1\\
=&\frac{1}{4}\sum\limits_{b\in \gf_{2^n}}\Big((1+(-1)^{{\rm Tr}_{1}^n(\frac{1}{ab})})(1+(-1)^{{\rm Tr}_{1}^n(\frac{1}{db})})\Big)\\
&-\frac{1}{2}(1+(-1)^{{\rm Tr}_{1}^n(\frac{d}{a})})-\frac{1}{2}(1+(-1)^{{\rm Tr}_{1}^n(\frac{a}{d})})-1\\
=&2^{n-2}+\frac{1}{4}(\sum\limits_{b\in \gf_{2^n}}(-1)^{{\rm Tr}_{1}^n(\frac{1}{bd})}+\sum\limits_{b\in \gf_{2^n}}(-1)^{{\rm Tr}_{1}^n(\frac{1}{ab})}+\sum\limits_{b\in \gf_{2^n}}(-1)^{{\rm Tr}_{1}^n(\frac{1}{ab}+\frac{1}{bd})})\\
&-\frac{1}{2}((-1)^{{\rm Tr}_{1}^n(\frac{a}{d})}+(-1)^{{\rm Tr}_{1}^n(\frac{d}{a})})-2\\
=&2^{n-2}-2-\frac{1}{2}((-1)^{{\rm Tr}_{1}^n(\frac{a}{d})}+(-1)^{{\rm Tr}_{1}^n(\frac{d}{a})}).
\end{aligned}
\end{equation*}
The values of $|N(b)|$ can be considered the following three cases.

\begin{itemize}
\item [\rm (i)] If ${\rm Tr}_{1}^n(\frac{a}{d})={\rm Tr}_{1}^n(\frac{d}{a})=0$, then we have
$$
|N(b)|=2^{n-2}-2-1=2^{n-2}-3.
$$

\item [\rm (ii)] If ${\rm Tr}_{1}^n(\frac{a}{d})=0$ and ${\rm Tr}_{1}^n(\frac{d}{a})=1$ or ${\rm Tr}_{1}^n(\frac{a}{d})=1$ and ${\rm Tr}_{1}^n(\frac{d}{a})=0$, then we have
$$
|N(b)|=2^{n-2}-2.
$$

\item [\rm (iii)] If ${\rm Tr}_{1}^n(\frac{a}{d})={\rm Tr}_{1}^n(\frac{d}{a})=1$, then we have
$$
|N(b)|=2^{n-2}-2+1=2^{n-2}-1.
$$
\end{itemize}

Summarizing the above discussion, we can conclude the following results:

{\textbf{Subcase 1:}} If $n\equiv 0\,\, ({\rm mod}\,\, 4)$. According to the discussion on the sum of the part when $b\ne c$, it follows that ${\rm Tr}_{1}^n(\frac{a}{d})={\rm Tr}_{1}^n(\frac{d}{a})=0$ whenever $a^3=d^3$ and $a\ne d$. This implies that ${\rm dbct}(a,b,c,d)$ take values for both $b\ne c$ and $b=c$ under the condition $a^3=d^3$ and $a\ne d$, otherwise, it has values only when $b=c$. Next, our discussion proceeds with three cases.

{\textbf{Subcase 1.1:}} If $a^3=d^3$ and $a\ne d$, we always have ${\rm Tr}_{1}^n(\frac{a}{d})={\rm Tr}_{1}^n(\frac{d}{a})=0$. By (\ref{d-n-even}), we have ${\rm DDT}_{F}(\frac{1}{a},d)={\rm DDT}_{F}(a,\frac{1}{d})=2$.
Then we get
$$
{\rm DBCT}_{F}(a,d)=4\times 2\times 2+(2^{n-2}-3)\times 2\times 2+16=2^n+20,
$$
if $a^3\ne d^3$ and ${\rm Tr}_{1}^n(\frac{a}{d})={\rm Tr}_{1}^n(\frac{d}{a})=0$, it can be easily obtained that
$$
{\rm DBCT}_{F}(a,d)=4\times 2\times 2+(2^{n-2}-3)\times 2\times 2=2^n+4,
$$

{\textbf{Subcase 1.2:}} If ${\rm Tr}_{1}^n(\frac{a}{d})=0$ and ${\rm Tr}_{1}^n(\frac{d}{a})=1$ or ${\rm Tr}_{1}^n(\frac{a}{d})=1$ and ${\rm Tr}_{1}^n(\frac{d}{a})=0$. By (\ref{d-n-even}), ${\rm DDT}_{F}(\frac{1}{a},d)$ and ${\rm DDT}_{F}(a,\frac{1}{d})$ either take the values of $2$ and $0$, respectively, or $0$ and $2$, respectively. Then we further have
$$
{\rm DBCT}_{F}(a,d)=4\times 2+(2^{n-2}-2)\times 2\times 2=2^n,
$$

{\textbf{Subcase 1.3:}} If ${\rm Tr}_{1}^n(\frac{a}{d})={\rm Tr}_{1}^n(\frac{d}{a})=1$. By (\ref{d-n-even}), we have ${\rm DDT}_{F}(\frac{1}{a},d)={\rm DDT}_{F}(a,\frac{1}{d})=0$.
Hence, we get
$$
{\rm DBCT}_{F}(a,d)=(2^{n-2}-1)\times 2\times 2=2^n-4.
$$

{\textbf{Subcase 2:}} If $n\equiv 2\,\, ({\rm mod}\,\, 4)$. According to the discussion on the sum of the part when $b\ne c$, we always have ${\rm Tr}_{1}^n(\frac{a}{d})={\rm Tr}_{1}^n(\frac{d}{a})=1$ when $a^3=d^3$ and $a\ne d$. Hence, this means that ${\rm dbct}(a,b,c,d)$ has values for both $b\ne c$ and $b=c$ when $a^3=d^3$ and $a\ne d$, and otherwise ${\rm dbct}(a,b,c,d)$ has values only for $b=c$.
Next, we discuss the following three cases.

{\textbf{Subcase 2.1:}} If $a^3=d^3$ and $a\ne d$. By (\ref{d-n-even}), we have ${\rm DDT}_{F}(\frac{1}{a},d)={\rm DDT}_{F}(a,\frac{1}{d})=0$.
Then we further have
$$
{\rm DBCT}_{F}(a,d)=(2^{n-2}-1)\times 2\times 2+16=2^n+12,
$$
and if and $a^3\ne d^3$ and ${\rm Tr}_{1}^n(\frac{a}{d})={\rm Tr}_{1}^n(\frac{d}{a})=1$. Then we further get
$$
{\rm DBCT}_{F}(a,d)=(2^{n-2}-1)\times 2\times 2=2^n-4,
$$

{\textbf{Subcase 2.2:}} Similarly to the Subcase 1.2 for $n\equiv 0\,\, ({\rm mod}\,\, 4)$, here we omit the discussion.

{\textbf{Subcase 2.3:}} If ${\rm Tr}_{1}^n(\frac{a}{d})={\rm Tr}_{1}^n(\frac{d}{a})=0$. By (\ref{d-n-even}), we have ${\rm DDT}_{F}(\frac{1}{a},d)={\rm DDT}_{F}(a,\frac{1}{d})=2$.
Then we further obtain
$$
{\rm DBCT}_{F}(a,d)=4\times 2\times 2+(2^{n-2}-3)\times 2\times 2=2^n+4.
$$
This completes the proof.
\end{proof}

According to the Theorem \ref{inverse-DBCT}, we can obtain the following result.

\begin{proposition}\label{proposotion-inverse-DBCT}
Let $F(x)=x^{2^n-2}$ be the inverse function over $\mathbb{F}_{2^n}$. For $a, d\in \gf_{2^n}$, the explicit values of all entries of the {\rm DBCT} of $F(x)$ satisfies
\begin{equation*}
\begin{aligned}
&|\{(a,b)\in \gf_{2^n}^2: {\rm DBCT}_{F}(a,d)=2^{2n}\}|=2^{n+1}-1;\\
&|\{(a,b)\in \gf_{2^n}^2: {\rm DBCT}_{F}(a,d)=2^{n+1}\}|=2^n-1;\\
&|\{(a,b)\in \gf_{2^n}^2: {\rm DBCT}_{F}(a,d)=2^n+4\}|=(2^{n-2}+\frac{K(1)}{4}-1)(2^n-1);\\
&|\{(a,b)\in \gf_{2^n}^2: {\rm DBCT}_{F}(a,d)=2^n\}|=(2^{n-1}-\frac{K(1)}{2})(2^n-1);\\
&|\{(a,b)\in \gf_{2^n}^2: {\rm DBCT}_{F}(a,d)=2^n-4\}|=(2^{n-2}+\frac{K(1)}{4}-1)(2^n-1),\\
\end{aligned}
\end{equation*}
for $n>1$ is odd, and for $n>2$ is even, if $n\equiv 0\,\, ({\rm mod}\,\,4)$, then
\begin{equation*}
\begin{aligned}
&|\{(a,b)\in \gf_{2^n}^2: {\rm DBCT}_{F}(a,d)=2^{2n}\}|=2^{n+1}-1;\\
&|\{(a,b)\in \gf_{2^n}^2: {\rm DBCT}_{F}(a,d)=2^{n+1}+8\}|=2^n-1;\\
&|\{(a,b)\in \gf_{2^n}^2: {\rm DBCT}_{F}(a,d)=2^n+20\}|=2(2^n-1);\\
&|\{(a,b)\in \gf_{2^n}^2: {\rm DBCT}_{F}(a,d)=2^n+4\}|=(2^{n-2}+\frac{K(1)}{4}-4)(2^n-1);\\
&|\{(a,b)\in \gf_{2^n}^2: {\rm DBCT}_{F}(a,d)=2^n\}|=(2^{n-1}-\frac{K(1)}{2})(2^n-1);\\
&|\{(a,b)\in \gf_{2^n}^2: {\rm DBCT}_{F}(a,d)=2^n-4\}|=(2^{n-2}+\frac{K(1)}{4})(2^n-1),\\
\end{aligned}
\end{equation*}
if $n\equiv 2\,\, ({\rm mod}\,\,4)$, then
\begin{equation*}
\begin{aligned}
&|\{(a,b)\in \gf_{2^n}^2: {\rm DBCT}_{F}(a,d)=2^{2n}\}|=2^{n+1}-1;\\
&|\{(a,b)\in \gf_{2^n}^2: {\rm DBCT}_{F}(a,d)=2^{n+1}+8\}|=2^n-1;\\
&|\{(a,b)\in \gf_{2^n}^2: {\rm DBCT}_{F}(a,d)=2^n+12\}|=2(2^n-1);\\
&|\{(a,b)\in \gf_{2^n}^2: {\rm DBCT}_{F}(a,d)=2^n-4\}|=(2^{n-2}+\frac{K(1)}{4}-2)(2^n-1);\\
&|\{(a,b)\in \gf_{2^n}^2: {\rm DBCT}_{F}(a,d)=2^n\}|=(2^{n-1}-\frac{K(1)}{2})(2^n-1);\\
&|\{(a,b)\in \gf_{2^n}^2: {\rm DBCT}_{F}(a,d)=2^n+4\}|=(2^{n-2}+\frac{K(1)}{4}-2)(2^n-1).\\
\end{aligned}
\end{equation*}
\end{proposition}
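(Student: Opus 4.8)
The plan is to leverage Theorem \ref{inverse-DBCT}, which already partitions $\gf_{2^n}^2$ according to the value ${\rm DBCT}_F(a,d)$ takes; what remains is to count, for each prescribed value, the number of pairs $(a,d)$ falling into the corresponding class. First I would dispose of the degenerate pairs: those with $a=0$ or $d=0$ number $2\cdot 2^n-1=2^{n+1}-1$ (inclusion--exclusion on the two lines $a=0$ and $d=0$, which overlap only in $(0,0)$), and these are exactly the pairs with ${\rm DBCT}_F(a,d)=2^{2n}$. The diagonal pairs $a=d\ne 0$ number $2^n-1$ and account for the entry $2^{n+1}$ (odd $n$) or $2^{n+1}+8$ (even $n$).

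For the off-diagonal pairs with $a,d\in\gf_{2^n}^*$ and $a\ne d$, the key reduction is to parametrize by the ratio $u=a/d$, noting that $d/a=u^{-1}$ and that every fixed $u\in\gf_{2^n}^*\setminus\{1\}$ is realized by exactly $2^n-1$ pairs $(a,d)=(ud,d)$. Since all the conditions appearing in Theorem \ref{inverse-DBCT}, namely ${\rm Tr}_{1}^n(a/d)$, ${\rm Tr}_{1}^n(d/a)$, and $a^3=d^3$, depend only on $u$, each cardinality becomes $(2^n-1)$ times the number of admissible $u$. Thus the whole computation reduces to counting $u\in\gf_{2^n}^*\setminus\{1\}$ with prescribed trace behaviour.

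To count these I would use the additive-character indicators $\frac{1}{2}(1+(-1)^{{\rm Tr}_{1}^n(u)})$ and expand the product encoding the two simultaneous conditions on ${\rm Tr}_{1}^n(u)$ and ${\rm Tr}_{1}^n(u^{-1})$. Writing $N_{00}$ for the number of $u\in\gf_{2^n}^*$ with both traces $0$, the expansion produces the four sums $\sum_{u\ne 0}1=2^n-1$, $\sum_{u\ne 0}(-1)^{{\rm Tr}_{1}^n(u)}=-1$, $\sum_{u\ne 0}(-1)^{{\rm Tr}_{1}^n(u^{-1})}=-1$, and the Kloosterman sum $\sum_{u\ne 0}(-1)^{{\rm Tr}_{1}^n(u+u^{-1})}=K(1)-1$, yielding $N_{00}=2^{n-2}+\frac{K(1)}{4}-1$. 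The analogous indicators with minus signs give $N_{11}=2^{n-2}+\frac{K(1)}{4}$ for both traces equal to $1$, and the mixed class has size $N_{01}+N_{10}=(2^n-1)-N_{00}-N_{11}=2^{n-1}-\frac{K(1)}{2}$; the cube-root condition never touches the mixed class, since the exceptional values below all have equal traces.

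The remaining work, and the main obstacle, is the careful bookkeeping of the exceptional values $u=1$ and the primitive cube roots of unity. Since ${\rm Tr}_{1}^n(1)=n\bmod 2$, the value $u=1$ lies in the both-traces-$1$ class when $n$ is odd and in the both-traces-$0$ class when $n$ is even, so it must be subtracted from exactly one of $N_{00},N_{11}$ before multiplying by $2^n-1$. For even $n$ one further has ${\rm gcd}(3,2^n-1)=3$, so there are exactly two values $u\ne 1$ with $u^3=1$, i.e.\ with $a^3=d^3$; by the argument in Theorem \ref{inverse-DBCT} these satisfy ${\rm Tr}_{1}^n(u)={\rm Tr}_{1}^n(u^{-1})=0$ when $n\equiv 0\pmod 4$ and $=1$ when $n\equiv 2\pmod 4$. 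They form their own class (entry $2^n+20$ or $2^n+12$), contributing $2(2^n-1)$ pairs, and must be removed from whichever of $N_{00},N_{11}$ contains them so as not to double-count. Tracking which of $\{u=1\}$ and $\{u^3=1,\,u\ne 1\}$ sits in each trace class, separately for $n$ odd, $n\equiv 0\pmod 4$, and $n\equiv 2\pmod 4$, produces precisely the stated cardinalities; as a final consistency check one verifies that all listed cardinalities sum to $2^{2n}$, the total number of pairs.
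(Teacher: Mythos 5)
Your proposal is correct and follows essentially the same route as the paper: reduce to Theorem \ref{inverse-DBCT}, factor out the $2^n-1$ pairs per ratio $u=a/d$, evaluate the joint trace classes via the Kloosterman sum $K(1)$, and adjust for the exceptional values $u=1$ and the primitive cube roots of unity according to $n \bmod 4$. The only (immaterial) difference is bookkeeping style: you compute $N_{00}$ and $N_{11}$ over all of $\gf_{2^n}^*$ and subtract the exceptional points afterwards, whereas the paper works with the sets $A_i$ over $\gf_{2^n}\setminus\{0,1,\omega^s,\omega^{2s}\}$ from the outset and combines $|A_1|+|A_4|$ with the relation $|A_4|-|A_1|$ obtained from the balancedness of the trace.
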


\begin{proof}
By Theorem \ref{inverse-DBCT}, ${\rm DBCT}_{F}(a,d)=2^{2n}$ happens for $a=0$ or $d=0$. Hence,
$$
|\{(a,b)\in \gf_{2^n}^2: {\rm DBCT}_{F}(a,d)=2^{2n}\}|=2^n+2^n-1=2^{n+1}-1.
$$

Next, we divide the discussion into two cases as follows.

{\textbf{Case 1:}} If $n>1$ is odd. By Theorem \ref{inverse-DBCT}, we have ${\rm DBCT}_{F}(a,d)=2^{n+1}$ only if $a=d$. This gives
$$
|\{(a,b)\in \gf_{2^n}^2: {\rm DBCT}_{F}(a,d)=2^{n+1}\}|=2^n-1.
$$
For $a,d\in \gf_{2^n}^*$, in what follows, we always assume that $a\ne d$ unless otherwise stated. We have already shown that ${\rm DBCT}_{F}(a,d)=2^n+4$ if ${\rm Tr}_{1}^n(\frac{a}{d})={\rm Tr}_{1}^n(\frac{d}{a})=0$. Let $\frac{a}{d}=x$, then we have $x\in \gf_{2^n}\backslash \gf_{2}$.
By the definition of the Kloosterman sum $K(1)$ in Definition \ref{definition-K}, we have
\begin{equation*}
\begin{aligned}
K(1)-2&=\sum\limits_{x\in \gf_{2^n}\backslash \gf_{2}}(-1)^{{\rm Tr}_{1}^n(\frac{1}{x}+x)}\\
&=2|\{x\in \gf_{2^n}\backslash \gf_{2}: {\rm Tr}_{1}^n(\frac{1}{x}+x)=0\}|-(2^n-2)\\
&=4|\{x\in \gf_{2^n}\backslash \gf_{2}: {\rm Tr}_{1}^n(\frac{1}{x})=0\,\, {\rm and}\,\, {\rm Tr}_{1}^n(x)=0\}|-(2^n-2).
\end{aligned}
\end{equation*}
Thus,
$$
|\{x\in \gf_{2^n}\backslash \gf_{2}: {\rm Tr}_{1}^n(\frac{1}{x})=0\,\, {\rm and}\,\, {\rm Tr}_{1}^n(x)=0\}|=2^{n-2}+\frac{K(1)}{4}-1.
$$
Similarly, we also get
$$
|\{x\in \gf_{2^n}\backslash \gf_{2}: {\rm Tr}_{1}^n(\frac{1}{x})=1\,\, {\rm and}\,\, {\rm Tr}_{1}^n(x)=1\}|=2^{n-2}+\frac{K(1)}{4}-1.
$$
Then it can be easily obtained that
$$
\begin{aligned}
&|\{x\in \gf_{2^n}\backslash \gf_{2}: {\rm Tr}_{1}^n(\frac{1}{x})=0\,\, {\rm and}\,\, {\rm Tr}_{1}^n(x)=1\,\, {\rm or}\,\, {\rm Tr}_{1}^n(\frac{1}{x})=1\,\, {\rm and}\,\, {\rm Tr}_{1}^n(x)=0\}|\\
&=2^n-2-(2^{n-2}+\frac{K(1)}{4}-1)-(2^{n-2}+\frac{K(1)}{4}-1)\\
&=2^{n-1}-\frac{K(1)}{2}.
\end{aligned}
$$
Hence, we have the desired result in Proposition \ref{proposotion-inverse-DBCT} when $n>1$ is odd.

{\textbf{Case 2:}} If $n>2$ is even. By Theorem \ref{inverse-DBCT}, we have ${\rm DBCT}_{F}(a,d)=2^{n+1}+8$ only if $a=d$. Then we get
$$
|\{(a,b)\in \gf_{2^n}^2: {\rm DBCT}_{F}(a,d)=2^{n+1}\}|=2^n-1.
$$
For $a,d\in \gf_{2^n}^*$, in what follows, we always assume that $a\ne d$ unless otherwise stated. When $n\equiv 0\,\, ({\rm mod}\,\, 4)$, we have shown that ${\rm DBCT}_{F}(a,d)=2^n+20$ if $a^3=d^3$. According to the discussion on the sum of the part when $b\ne c$, note that the number of $a$ such that $a^3=d^3$ is two for a given element $d$.
Thus we get
$$
|\{(a,b)\in \gf_{2^n}^2: {\rm DBCT}_{F}(a,b)=2^n+20\}|=2(2^n-1).
$$
Also, when $a^3\ne d^3$, we have shown that ${\rm DBCT}_{F}(a,d)=2^n+4$ and ${\rm DBCT}_{F}(a,d)=2^n-4$ when ${\rm Tr}_{1}^n(\frac{a}{d})={\rm Tr}_{1}^n(\frac{d}{a})=0$ and ${\rm Tr}_{1}^n(\frac{a}{d})={\rm Tr}_{1}^n(\frac{d}{a})=1$, respectively.
Let $\frac{a}{d}=x$ and $s=\frac{2^n-1}{3}$. According to the discussion on the sum of the part when $b\ne c$, we have $x\in \gf_{2^n}\backslash \{0,1,\omega^s, \omega^{2s}\}$. Next, we consider the number of $x\in \gf_{2^n}\backslash \{0,1,\omega^s, \omega^{2s}\}$ such that ${\rm Tr}_{1}^n(\frac{1}{x})={\rm Tr}_{1}^n(x)=0$ and ${\rm Tr}_{1}^n(\frac{1}{x})={\rm Tr}_{1}^n(x)=1$, respectively. We define
$$
A_1:=\{x\in \gf_{2^n}\backslash \{0,1,\omega^s, \omega^{2s}\}: {\rm Tr}_{1}^n(\frac{1}{x})=0\,\, {\rm and}\,\, {\rm Tr}_{1}^n(x)=0\};
$$
$$
A_2:=\{x\in \gf_{2^n}\backslash \{0,1,\omega^s, \omega^{2s}\}: {\rm Tr}_{1}^n(\frac{1}{x})=0\,\, {\rm and}\,\, {\rm Tr}_{1}^n(x)=1\};
$$
$$
A_3:=\{x\in \gf_{2^n}\backslash \{0,1,\omega^s, \omega^{2s}\}: {\rm Tr}_{1}^n(\frac{1}{x})=1\,\, {\rm and}\,\, {\rm Tr}_{1}^n(x)=0\};
$$
$$
A_4:=\{x\in \gf_{2^n}\backslash \{0,1,\omega^s, \omega^{2s}\}: {\rm Tr}_{1}^n(\frac{1}{x})=1\,\, {\rm and}\,\, {\rm Tr}_{1}^n(x)=1\}.
$$
When $n\equiv 0\,\, ({\rm mod}\,\, 4)$, we have known that ${\rm Tr}_{1}^n(\omega^s)={\rm Tr}_{1}^n(\omega^{2s})=0$. Thus we get $|A_1|+|A_2|=2^{n-1}-4$ and $|A_2|+|A_4|=2^{n-1}$, this implies that $|A_4|=|A_1|+4$. When $n\equiv 2\,\, ({\rm mod}\,\, 4)$, we have known that ${\rm Tr}_{1}^n(\omega^s)={\rm Tr}_{1}^n(\omega^{2s})=1$. Thus we get $|A_1|+|A_2|=2^{n-1}-2$ and $|A_2|+|A_4|=2^{n-1}-2$, this implies that $|A_4|=|A_1|$.
By the definition of the Kloosterman sum $K(1)$ in Definition \ref{definition-K}, we have
\begin{equation*}
\begin{split}
K(1)&=\sum_{x\in \gf_{2^n}}(-1)^{{\rm Tr}_{1}^n(\frac{1}{x}+x)}\\
&=\sum_{\substack{x\in \gf_{2^n} \backslash \{0,1,\omega^s, \omega^{2s}\}}}(-1)^{{\rm Tr}_{1}^n(\frac{1}{x}+x)}+1+1+(-1)^{{\rm Tr}_{1}^n(\omega^{-s}+\omega^s)}+(-1)^{{\rm Tr}_{1}^n(\omega^{-2s}+\omega^{2s})}\\
&=\sum_{\substack{x\in \gf_{2^n} \backslash \{0,1,\omega^s, \omega^{2s}\}}}(-1)^{{\rm Tr}_{1}^n(\frac{1}{x}+x)}+4\\
&=2|\{x\in \gf_{2^n} \backslash \{0,1,\omega^s, \omega^{2s}\}: {\rm Tr}_{1}^n(\frac{1}{x}+x)=0\}|-(2^n-4)+4\\
&=2(|A_1|+|A_4|)-2^n+8,
\end{split}
\end{equation*}
where the third equality holds since ${\rm Tr}_{1}^n(\omega^{-s})={\rm Tr}_{1}^n(\omega^{2s})={\rm Tr}_{1}^n(\omega^s)={\rm Tr}_{1}^n(\omega^{-2s})$.
Thus, when $n\equiv 0\,\, ({\rm mod}\,\, 4)$, we obtain
$$
|A_1|=2^{n-2}+\frac{K(1)}{4}-4,\,\, |A_4|=2^{n-2}+\frac{K(1)}{4}.
$$
When $n\equiv 2\,\, ({\rm mod}\,\, 4)$, we have
$$
|A_1|=2^{n-2}+\frac{K(1)}{4}-2,\,\, |A_4|=2^{n-2}+\frac{K(1)}{4}-2.
$$
According to the above discussion, we always have $|A_2|+|A_3|=2^{n-1}-\frac{K(1)}{2}$ whether $n\equiv 0\,\, ({\rm mod}\,\, 4)$ or $n\equiv 2\,\, ({\rm mod}\,\, 4)$.

Summarizing the above discussion, we can conclude the result in this proposition.
\end{proof}

Next we provide a numerical example.

\begin{example}
If $n=4$, then we have $K(1)=0$. By Magma, the {\rm DBCT} of $F(x)$ over $\mathbb{F}_{2^4}$ can be directly computed:
\begin{equation*}
\begin{aligned}
&|\{(a,b)\in \gf_{2^n}^2: {\rm DBCT}_{F}(a,d)=256\}|=31;\\
&|\{(a,b)\in \gf_{2^n}^2: {\rm DBCT}_{F}(a,d)=40\}|=15;\\
&|\{(a,b)\in \gf_{2^n}^2: {\rm DBCT}_{F}(a,d)=36\}|=30;\\
&|\{(a,b)\in \gf_{2^n}^2: {\rm DBCT}_{F}(a,d)=16\}|=120;\\
&|\{(a,b)\in \gf_{2^n}^2: {\rm DBCT}_{F}(a,d)=12\}|=60.\\
\end{aligned}
\end{equation*}
If $n=5$, then we have $K(1)=12$. By Magma, the {\rm DBCT} of $F(x)$ over $\mathbb{F}_{2^5}$ can be directly computed:
\begin{equation*}
\begin{aligned}
&|\{(a,b)\in \gf_{2^n}^2: {\rm DBCT}_{F}(a,d)=1024\}|=63;\\
&|\{(a,b)\in \gf_{2^n}^2: {\rm DBCT}_{F}(a,d)=64\}|=31;\\
&|\{(a,b)\in \gf_{2^n}^2: {\rm DBCT}_{F}(a,d)=36\}|=310;\\
&|\{(a,b)\in \gf_{2^n}^2: {\rm DBCT}_{F}(a,d)=32\}|=310;\\
&|\{(a,b)\in \gf_{2^n}^2: {\rm DBCT}_{F}(a,d)=28\}|=310.\\
\end{aligned}
\end{equation*}
If $n=6$, then we have $K(1)=-8$. By Magma, the {\rm DBCT} of $F(x)$ over $\mathbb{F}_{2^6}$ can be directly computed:
\begin{equation*}
\begin{aligned}
&|\{(a,b)\in \gf_{2^n}^2: {\rm DBCT}_{F}(a,d)=4096\}|=127;\\
&|\{(a,b)\in \gf_{2^n}^2: {\rm DBCT}_{F}(a,d)=136\}|=63;\\
&|\{(a,b)\in \gf_{2^n}^2: {\rm DBCT}_{F}(a,d)=76\}|=126;\\
&|\{(a,b)\in \gf_{2^n}^2: {\rm DBCT}_{F}(a,d)=68\}|=756;\\
&|\{(a,b)\in \gf_{2^n}^2: {\rm DBCT}_{F}(a,d)=64\}|=2268;\\
&|\{(a,b)\in \gf_{2^n}^2: {\rm DBCT}_{F}(a,d)=60\}|=756.\\
\end{aligned}
\end{equation*}
The numerical results mentioned above are all consistent with the statement of Proposition \ref{proposotion-inverse-DBCT}.
\end{example}

The following result gives the double boomerang uniformity of $F(x)$ over $\gf_{2^n}$.

\begin{corollary}\label{dbu}
Let $F(x)=x^{2^n-2}$ be the inverse function over $\mathbb{F}_{2^n}$.  Then the double boomerang uniformity of $F(x)$ satisfies
$$\beta_d(F)=
\begin{cases}
    2^{n+1}, &  {\rm if}\,\,  n>1\,\, {\rm is\,\, odd}; \\
    2^{n+1}+8, &  {\rm if}\,\,  n>2\,\, {\rm is\,\, even}.\\
\end{cases}
$$
\end{corollary}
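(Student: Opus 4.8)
The plan is to read off Corollary \ref{dbu} directly from the explicit entry values furnished by Theorem \ref{inverse-DBCT}. Recall that by Definition \ref{definition-DBCT} the double boomerang uniformity is $\beta_d(F)=\max_{a,d\in\gf_{2^n}^*}{\rm DBCT}_F(a,d)$, so the entries with $a=0$ or $d=0$ (all equal to $2^{2n}$) are deliberately excluded, and the task reduces to maximizing over the finite list of values that Theorem \ref{inverse-DBCT} assigns to pairs $(a,d)\in\gf_{2^n}^{*2}$. I would therefore proceed parity by parity, in each case comparing the candidate values and confirming that the largest one is genuinely attained.

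First I would treat the odd case. For $n>1$ odd, Theorem \ref{inverse-DBCT} shows that the only values occurring for $a,d\in\gf_{2^n}^*$ are $2^{n+1}$ (attained exactly when $a=d$), together with $2^n+4$, $2^n$, and $2^n-4$. Since $n\geq 3$ forces $2^n\geq 8>4$, we get $2^{n+1}=2^n+2^n>2^n+4$, so $2^{n+1}$ strictly dominates every other candidate; as it is realized on the nonempty diagonal $a=d\neq 0$, we conclude $\beta_d(F)=2^{n+1}$.

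Next I would handle the even case, splitting according to $n\bmod 4$ exactly as in the theorem. For $n\equiv 0\ ({\rm mod}\ 4)$ the candidate values are $2^{n+1}+8$ (at $a=d$), $2^n+20$, $2^n+4$, $2^n$, and $2^n-4$; here the decisive comparison $2^{n+1}+8>2^n+20$ is equivalent to $2^n>12$, which holds because $n\geq 4$. For $n\equiv 2\ ({\rm mod}\ 4)$ the candidates are $2^{n+1}+8$ (at $a=d$), $2^n+12$, $2^n+4$, $2^n$, and $2^n-4$; now $2^{n+1}+8>2^n+12$ reduces to $2^n>4$, which holds because $n\geq 6$. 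In both subcases the largest value $2^{n+1}+8$ is attained at $a=d$, giving $\beta_d(F)=2^{n+1}+8$.

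There is essentially no obstacle here: the corollary is an immediate maximization over the finite value set supplied by Theorem \ref{inverse-DBCT}, and the only content is the routine verification that the diagonal entry exceeds every competitor, which amounts to checking that $2^n$ dominates the small additive constants $4$, $12$, and $20$ throughout the stated range of $n$. The one point I would state explicitly is that the maximum is \emph{attained} rather than merely bounded: the diagonal $a=d\neq 0$ is nonempty, so the value $2^{n+1}$ (resp. $2^{n+1}+8$) actually occurs, making the maximum well-defined and equal to the claimed quantity.
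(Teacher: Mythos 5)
Your proposal is correct and matches the paper's (implicit) argument: the corollary is stated as an immediate consequence of Theorem \ref{inverse-DBCT}, obtained by maximizing over the finite list of entry values with $a,d\in\gf_{2^n}^*$, exactly as you do. Your explicit inequality checks ($2^n>4$, $2^n>12$) and the remark that the diagonal $a=d$ is nonempty, so the maximum is attained, are the only content needed and are all verified correctly.
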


\section{Conclusion}\label{con-remarks}

In this paper, we presented the UBCT, LBCT, respectively, of the inverse function $F(x)=x^{2^n-2}$ over $\gf_{2^n}$ for arbitrary $n$. Moreover, we study the DBCT of $F(x)$ over $\gf_{2^n}$, and further determine the double boomerang uniformity of $F(x)$. Our achievements are obtained by solving specific equations over $\gf_{2^n}$ and by developing techniques to calculate the exact value of each entry on each table and determining the precise number of elements with a given entry. It is an interesting and challenging problem to completely determine the DBCT of other classes of functions over $\gf_{2^n}$, besides the ones we considered.

\section*{Acknowledgments}

This work was supported by the National Key Research and Development Program of China (No. 2021YFA1000600), the National Natural Science Foundation of China (Nos. 62072162, 12201193), the Natural Science Foundation of Hubei Province of China (No. 2021CFA079), the Application Foundation Frontier Project of Wuhan Science and Technology Bureau (No. 2020010601012189) and the Innovation Group Project of the Natural Science Foundation of Hubei Province of China (No. 2023AFA021).

\section*{Appendix A}
{\textbf{The proof of Lemma \ref{inverse-UBCT}:}}
 \begin{proof}
To proof this lemma, we need to consider the number of solutions of the equation system (\ref{UBCT}),
i.e.,
\begin{equation}\label{UBCT-0}
\begin{cases}
    x^{2^n-2}+y^{2^n-2}=c, \\
    (x+a)^{2^n-2}+(y+a)^{2^n-2}=c, \\
    (x+a)^{2^n-2}+x^{2^n-2}=b.  \\
\end{cases}
\end{equation}

If $a=0$, we have that (\ref{UBCT-0}) has solutions only if $b=0$. Then (\ref{UBCT-0}) can be reduced to $x^{2^n-2}+y^{2^n-2}=c$. Since the inverse function is a permutation, then for any $x\in \gf_{2^n}$, there exists a unique $y\in \gf_{2^n}$ such that $y^{2^n-2}=c+x^{2^n-2}$. Hence, for any $c\in \gf_{2^n}$, we get
$${\rm UBCT}_{F}(a,b,c)=2^n.$$

If $a,b\ne 0$ and $c=0$, the first equation in (\ref{UBCT-0}) reduces to $x^{2^n-2}+y^{2^n-2}=0$, which implies that $x=y$. Then we have
$${\rm UBCT}_{F}(a,b,c)={\rm DDT}_{F}(a,b).$$

Next, we assume that $a, b, c\ne 0$. Obviously, the equation system (\ref{UBCT-0}) is equivalent to
\begin{equation}\label{UBCT-1}
\begin{cases}
    x^{2^n-2}+y^{2^n-2}=c, \\
    \Delta(x)=\Delta(y), \\
    (x+a)^{2^n-2}+x^{2^n-2}=b, \\
\end{cases}
\end{equation}
where $\Delta(x)=(x+a)^{2^n-2}+x^{2^n-2}$ with $a\ne 0$.

Let $\Delta(x)=\Delta(y)=\gamma$ for some $\gamma \in \gf_{2^n}^*$.  For $\Delta(x)=(x+a)^{2^n-2}+x^{2^n-2}=\gamma$ with $a\ne 0$, if $x=0,a$, then we have $a\gamma=1$. Assume that $x\ne 0,a$, then $\Delta(x)=\gamma$ can be reduced to
\begin{equation}\label{UBCT-2}
\gamma x^2+a\gamma x+a=0,
\end{equation}
where $\gamma \in \gf_{2^n}^*$.

We start by considering the case $n$ is odd.

{\textbf{Case 1:}} Assume that $a\gamma=1$. Then we have ${\rm Tr}_{1}^n(\frac{1}{a\gamma})=1$, since $n$ is odd. By Lemma \ref{lemma1-root}, this implies that (\ref{UBCT-2}) has no solution when $a\gamma=1$. Thus $\Delta(x)=\gamma$ has two solutions $x=0,a$ when $a\gamma=1$. Similarly $\Delta(y)=\gamma$ has two solutions $y=0,a$ when $a\gamma=1$. Since $x^{2^n-2}+y^{2^n-2}=c$ with $c\ne 0$, then we have that $\Delta(x)=\Delta(y)=\gamma$ has two possible solutions $(x,y)=(0,a), (a,0)$. We then determine whether $(x,y)=(0,a), (a,0)$ are the solutions of (\ref{UBCT-0}). Substituting $(x,y)=(0,a), (a,0)$ into (\ref{UBCT-0}), we always have
\begin{equation*}
\begin{cases}
    \frac{1}{a}=c, \\
    \frac{1}{a}=b.\\
\end{cases}
\end{equation*}
This means that $(x,y)=(0,a), (a,0)$ are the two solutions of (\ref{UBCT-0}) when $b=c$ and $ab=1$.

{\textbf{Case 2:}} Assume that $a\gamma\ne 1$. By Lemma \ref{lemma1-root}, if ${\rm Tr}_{1}^n(\frac{1}{a\gamma})=0$, we know that (\ref{UBCT-2}) has two solutions over $\gf_{2^n}$, namely $\omega$ and $\omega+a$, where $\omega \ne 0,a$. Similarly $\Delta(y)=\gamma$ has two solutions $y=\omega,\omega+a$. Since $x^{2^n-2}+y^{2^n-2}=c$ with $c\ne 0$, then we have that $\Delta(x)=\Delta(y)=\gamma$ has two possible solutions $(x,y)=(\omega, \omega+a), (\omega+a,\omega)$. We then determine whether $(x,y)=(\omega, \omega+a), (\omega+a,\omega)$ are the solutions of (\ref{UBCT-0}). Substituting $(x,y)=(\omega, \omega+a), (\omega+a,\omega)$ into (\ref{UBCT-0}), we always have
\begin{equation*}
\begin{cases}
    \frac{1}{\omega+a}+\frac{1}{\omega}=c, \\
    \frac{1}{\omega+a}+\frac{1}{\omega}=b. \\
\end{cases}
\end{equation*}
which is equivalent to
\begin{equation*}
\begin{cases}
    b=c, \\
    c\omega^2+ac\omega+a=0.
\end{cases}
\end{equation*}
Since $ac\ne 0$, by Lemma \ref{lemma1-root}, the second equation of the above equation system has solutions over $\gf_{2^n}$ only if ${\rm Tr}_{1}^n(\frac{1}{ac})=0$. Therefore, we get $(x,y)=(\omega, \omega+a), (\omega+a,\omega)$  are two solutions of (\ref{UBCT-0}) if and only if $b=c$ and ${\rm Tr}_{1}^n(\frac{1}{ac})=0$.

Now we study the case when $n$ is even.

{\textbf{Case 1:}} Assume that $a\gamma=1$. Then we have ${\rm Tr}_{1}^n(\frac{1}{a\gamma})=0$, since $n$ is even. By Lemma \ref{lemma1-root}, this implies that (\ref{UBCT-2}) has two solutions, namely $\omega$ and $\omega+a$, where $\omega \ne 0,a$. Thus we have that $\Delta(x)=\gamma$ has four solutions $x=0, a, \omega, \omega+a$ when $a\gamma=1$, where $\omega \ne 0,a$. Similarly $\Delta(y)=\gamma$ has four solutions $y=0, a, \omega, \omega+a$ when $a\gamma=1$, where $\omega \ne 0,a$. Since $x^{2^n-2}+y^{2^n-2}=c$ with $c\ne 0$, then we have that $\Delta(x)=\Delta(y)=\gamma$ has twelve possible solutions
$$
\begin{aligned}
(x,y)=&(0, \omega), (\omega, 0), (a, \omega+a), (\omega+a, a),\\
&(a, \omega), (\omega, a), (0, \omega+a), (\omega+a, 0),\\
&(0, a), (a, 0), (\omega+a, \omega), (\omega, \omega+a).\\
\end{aligned}
$$
Substituting this twelve possible solutions into (\ref{UBCT-0}), by computing,  we can consider the following three subcases.

{\textbf{Subcase 1:}} Substituting $(x,y)=(0, \omega), (\omega, 0), (a, \omega+a), (\omega+a, a)$ into (\ref{UBCT-0}), we always have
\begin{equation*}
\begin{cases}
    \omega=\frac{1}{c},\\
    ab=1, \\
    ac^2+abc+b=0.
\end{cases}
\end{equation*}
Since $\omega\ne 0,a$ and $c\ne 0$, then we have $ac\ne 1$.
Thus we obtain that $(x,y)=(0, \frac{1}{c}), (\frac{1}{c}, 0), (a, \frac{1}{c}+a), (\frac{1}{c}+a, a)$ are the four solutions of (\ref{UBCT-0}) when $b\ne c$, $ab=1$ and $ac^2+abc+b=0$.

{\textbf{Subcase 2:}} Substituting $(x,y)=(a, \omega), (\omega, a), (0, \omega+a), (\omega+a, 0)$ into (\ref{UBCT-0}), we always have
\begin{equation*}
\begin{cases}
    \omega=\frac{1}{c}+a,\\
    ab=1, \\
    ac^2+abc+b=0.
\end{cases}
\end{equation*}
Since $\omega\ne 0,a$ and $c\ne 0$, then we have $ac\ne 1$.
Thus we obtain that $(x,y)=(0, \frac{1}{c}), (\frac{1}{c}, 0), (a, \frac{1}{c}+a), (\frac{1}{c}+a, a)$ are the four solutions of (\ref{UBCT-0}) when $b\ne c$, $ab=1$ and $ac^2+abc+b=0$.

{\textbf{Subcase 3:}} Substituting $(x,y)=(0, a), (a, 0), (\omega+a, \omega), (\omega, \omega+a)$ into (\ref{UBCT-0}), we always have
\begin{equation*}
\begin{cases}
    b=c, \\
    ab=1. \\
\end{cases}
\end{equation*}
According to the above discussion, we know that $(x,y)=(0, \frac{1}{c}), (\frac{1}{c}, 0), (a, \frac{1}{c}+a), (\frac{1}{c}+a, a)$ are the four solutions of (\ref{UBCT-0}) when $b\ne c$ and $ab=1$ and $ac^2+abc+b=0$, and $(x,y)=(0, a), (a, 0), (\omega+a, \omega), (\omega, \omega+a)$ are the four solutions of (\ref{UBCT-0}) when $b=c$ and $ab=1$, where $\omega \ne 0,a$.

{\textbf{Case 2:}} Assume that $a\gamma\ne 1$. Similar to the Case 2 for $n$ is odd. Here we omit this proof.
This completes the proof.
\end{proof}

\end{document}